\theoremstyle{plain}
\newtheorem{thm}{Theorem}[section] 
\newtheorem{lemma}[thm]{Lemma}
\theoremstyle{definition}
\newtheorem{defn}{Definition}[section]
\newtheorem{rem}[defn]{Remark}
\theoremstyle{remark}
\numberwithin{equation}{section}
\numberwithin{figure}{section}
\begin{document}
\title{Classification of  degree three polynomial solutions to the Polubarinova-Galin equation
}
\author{Yu-Lin Lin}

\date{\today}

\label{firstpage}
\maketitle

\begin{abstract}
The Polubarinova-Galin equation is derived from zero-surface-tension Hele-Shaw flow driven by injection.  In this paper, we classify degree three polynomial solutions to the Polubarinova-Galin equation  into three categories: global solutions, solutions which can be continued after blow-up and solutions which cannot be continued after blow-up. The coefficient region of the initial functions  in each category is obtained. \end{abstract} 
 \section{Introduction}
 We first list some useful notation. For any  set $\mathbb{E}$ which contains the origin, we can define
 \[
 \begin{array}{ll}
 &\mathcal{H}(\mathbb{E})=\left\{f\mid \mbox{$f(\zeta)$ is analytic in a neighborhood of $\mathbb{E}$, $f(0)=0$ and $f^{'}(0)>0$ }\right\}.\\
  &\mathcal{O}_{loca}(\mathbb{E})=\left\{f\mid \mbox{$f(\zeta)\in\mathcal{H}(\mathbb{E}), f^{'}(\zeta)\neq 0$ for $\zeta\in\mathbb{E}$} \right\}.\\
& \mathcal{O}_{univ}(\mathbb{E})=\left\{f\mid \mbox{$f(\zeta)\in\mathcal{O}_{loca}(\mathbb{E})$ is univalent in a neighborhood of $\mathbb{E}$ }\right\}.\\
& \mathcal{P}_{n}=\left\{f|f=\sum_{j=1}^{n}a_{j}(0)\zeta^{j}, a_{1}(0)>0, a_{n}(0)>0\right\}.
 \end{array}\]
   \[\mathcal{P}_{n,univ}(\mathbb{E})=\mathcal{O}_{univ}(\mathbb{E})\cap\mathcal{P}_{n},\quad \mathcal{P}_{n,loca}(\mathbb{E})=\mathcal{O}_{loca}(\mathbb{E})\cap\mathcal{P}_{n}.\]
   We set
  \[\mathbb{D}=\left\{z||z|<1, z\in\mathbb{C}\right\}, \quad f_{t}(\zeta,t)=\frac{\partial}{\partial t}f(\zeta,t), \quad f^{'}(\zeta,t)=\frac{\partial}{\partial \zeta}f(\zeta,t).
 \]
 In this paper, we mainly deal with zero-surface-tension (ZST) Hele-Shaw flow driven by injection at the origin with the strength $2\pi$.  Denote the moving domain to be $\Omega(t), t\geq 0$. In 1948, Polubarinova and Galin consider the case that $\Omega(t)$ is simply-connected and that $\partial\Omega(t)$ is real-analytic. In this case, there exist conformal mappings $f(\cdot,t)\in \mathcal{O}_{univ}(\overline{\mathbb{D}})$ such that $f(\mathbb{D},t)=\Omega(t)$. Polubarinova and Galin show that $f(\zeta,t)$ satisfies 
 \begin{equation}
 \label{pg}
  Re\left[f_{t}(\zeta,t)\overline{f^{'}(\zeta,t)\zeta}\right]=1, \zeta\in\partial \mathbb{D}
  \end{equation}
  and it is called the Polubarinova-Galin equation (the P-G equation). If $f(\zeta,t)\in \mathcal{O}_{univ}(\overline{\mathbb{D}})$ satisfies the P-G equation (\ref{pg})
 and  is continuously differentiable with respect to $t$ in  $[0,\epsilon)$, we call it  by an univalent (or strong) solution to (\ref{pg}).  If the univalent solution $f(\zeta,t)$ ceases to exist as $t=t_{1}$, we say that  the univalent solution $f(\zeta,t)$ blows up at $t=t_{1}$. Similarly, we can also define a locally univalent solution to (\ref{pg}) if $f(\zeta,t)\in \mathcal{O}_{loca}(\overline{\mathbb{D}})$ satisfies the P-G equation (\ref{pg}) and  is continuously differentiable with respect to $t$ in  $[0,\epsilon)$.  If the locally univalent solution $f(\zeta,t)$ ceases to exist as $t=t_{2}$, we say that  this locally univalent solution $f(\zeta,t)$ blows up at $t=t_{2}$. \par
  The short-time well-posedness for univalent solutions  has been proven in Reissig and Wolfersdorf~\cite{reissig}. In Gustafsson~\cite{gustaf1}, the author  shows that if initial functions $f(\zeta,0)\in\mathcal{O}_{univ}(\overline{\mathbb{D}})$ are rational functions of $\zeta$, then the univalent solutions $f(\zeta,t)$ are rational functions of $\zeta$ also, and though the poles of $f(\zeta,t)$ move around in the complex plane, they never collide or produce new poles. Furthermore, if the initial functions  are polynomials of $\zeta$, then the solutions are polynomials of the same degree. These results in Reissig and Wolfersdorf~\cite{reissig} and Gustafsson~\cite{gustaf1}  also hold for  the locally univalent solutions. In this paper, we mainly deal with polynomial solutions. For a given polynomial   $f(\zeta,0)=\sum_{j=1}^{n}a_{j}(0)\zeta^{j}$ which is univalent in $\overline{\mathbb{D}}$, we can assume $a_{1}(0)>0, a_{n}(0)>0$ by suitable normalization and hence $f(\zeta,0)\in\mathcal{P}_{n,univ}(\overline{\mathbb{D}})$.\par
  In this paper, we mainly deal with initial functions in the set $\mathcal{P}_{3,univ}(\overline{\mathbb{D}})$. We  can classify $f(\zeta,0)\in\mathcal{P}_{3,univ}(\overline{\mathbb{D}})$ into three categories:
  \begin{enumerate}
  \item[($C_{1}$)] $f(\zeta,0)\in\mathcal{P}_{3,univ}(\overline{\mathbb{D}})$ which gives rise to a global strong solution to (\ref{pg}).
  \item [($C_{2}$)] $f(\zeta,0)\in\mathcal{P}_{3,univ}(\overline{\mathbb{D}})$ which gives rise to a  strong solution to (\ref{pg}) that  blows up at $t=t^{*}$ but  can be continued after that.
 \item [($C_{3}$)] $f(\zeta,0)\in\mathcal{P}_{3,univ}(\overline{\mathbb{D}})$ which gives rise to a strong solution to (\ref{pg})  that  blows up at $t=t^{*}$ and cannot be continued after that.
  \end{enumerate}
 The main goal of this paper is to give a clear description about the coefficient region of  each category. The coefficient regions of the first two categories are obtained in this paper.  As for the third category, we can always express it  to be the complement of the first two categories in  $\mathcal{P}_{3,univ}(\overline{\mathbb{D}})$. However, the coefficient region of the third category cannot be expressed explicitly since the coefficient region of $\mathcal{P}_{3,univ}(\overline{\mathbb{D}})$ is still a difficult open problem and we are not going to address it  in this paper. This way of classifying solutions has been suggested in  Howison \cite{obstacle}. For these univalent solutions which blow up at finite time but can be continued after blow-up, Howison \cite{obstacle} has many discussions about their singularity  by explaining  the ZST Hele-Shaw problem to be an obstacle problem.\par
   Since we obtain the coefficient region of functions in $\mathcal{P}_{3,univ}(\overline{\mathbb{D}})$ which give rise to global univalent solutions to (\ref{pg}),  we would like to review some past work regarding global dynamics of univalent solutions to (\ref{pg}). In Gustafsson  and Prokhorov and
              Vasil'ev \cite{gustaf2}, it is shown that starlike univalent functions in $\mathcal{O}_{univ}(\overline{\mathbb{D}})$ must give rise to global strong solutions to (\ref{pg}). In Huntingford \cite{hford}, the author considers degree three polynomials $f(\zeta,0)=a_{1}(0)\zeta+a_{2}(0)\zeta^2+a_{3}(0)\zeta^3\in \mathcal{P}_{3,univ}(\overline{\mathbb{D}}), a_{2}(0)\in \mathbb{R}$ and finds the exact coefficient region for  these $f(\zeta,0)$ which give rise to  global strong solutions to (\ref{pg}). In this work, we generalize the result in Huntingford \cite{hford} to the general case that $a_{2}(0)\in\mathbb{C}$.\par
 One of the difficulties to obtain our result  is due to the lack of explicit description for the coefficient region of  functions in $\mathcal{P}_{3,univ}(\overline{\mathbb{D}})$. In Brannan \cite{brannan}, there are some necessary and sufficient conditions for the coefficients of  functions in $\mathcal{P}_{3,univ}(\overline{\mathbb{D}})$. In this former work, the author also determines the coefficient region of real-valued coefficient polynomials in $ \mathcal{P}_{3,univ}(\overline{\mathbb{D}})$. However,  the coefficient region of all functions in $\mathcal{P}_{3,univ}(\overline{\mathbb{D}})$ is not fully determined yet.  Therefore, we need to find some tools to obtain our results. \par
 The main tool to obtain our main result is Theorem 4.1 in Gustafsson and Lin~\cite{gustaf4} which shows that global locally univalent solutions to (\ref{pg}) are global univalent solutions to (\ref{pg}). In this work, we first determine  the coefficient region of  functions in $\mathcal{P}_{3,loca}(\overline{\mathbb{D}})$  and then use Theorem 4.1 in Gustafsson and Lin~\cite{gustaf4} to obtain the coefficient region of  functions in $\mathcal{P}_{3,univ}(\overline{\mathbb{D}})$ which give rise to global univalent solutions to (\ref{pg}) .\par
 The organization of this paper is as follows: In Section~\ref{sec2}, we determine the coefficient region of initial functions of global univalent degree three  polynomial solutions. In Section \ref{sec3}, we classify these functions in $\mathcal{P}_{3,univ}(\overline{\mathbb{D}})$  which fail to give rise to  global strong  solutions  by checking if the strong solutions can be continued after blow-up. In Section \ref{sec4}, we have discussion about the possibility of applying our method to polynomial solutions of degree $n\geq 4$.
  \section{Coefficient region of initial functions of global strong degree three polynomial solutions}
  \label{sec2}
  The following is the list of notations we use in the rest of this paper.
 \begin{defn}
Define          
        \[\mathcal{P}_{n,glob}(\overline{\mathbb{D}})=\left\{f|f\in \mathcal{P}_{n,univ}(\overline{\mathbb{D}})\quad \mbox{gives rise to a global univalent solution to (\ref{pg})}\right\}. \]
        \end{defn}
        \begin{defn}
        \label{def3}
 Define  the map $\Lambda:\mathcal{P}_{n}\rightarrow\mathbb{R}^{2n-3}$ by
        \[\Lambda(f)=\left(Re(\frac{a_{2}}{a_{1}}),Im(\frac{a_{2}}{a_{1}}),\cdots,Re(\frac{a_{n-1}}{a_{1}}),Im(\frac{a_{n-1}}{a_{1}}),\frac{a_{n}}{a_{1}} \right), f(\zeta)=\sum_{j=1}^{n}a_{j}\zeta^{j}\in \mathcal{P}_{n}.\]
       With the mapping $\Lambda$, we define
         \[
 \begin{array}{ll}
 &\Lambda_{n,univ}=\left\{\Lambda(f)|f\in\mathcal{P}_{n,univ}(\overline{\mathbb{D}})\right\}\\
   &\Lambda_{n,loca}=\left\{\Lambda(f)|f\in\mathcal{P}_{n,loca}(\overline{\mathbb{D}})\right\}\\
&\Lambda_{n,glob}=\left\{\Lambda(f)|f\in \mathcal{P}_{n,glob}(\overline{\mathbb{D}})\right\}. \end{array}\]
   \end{defn}
   \par
 These sets $\Lambda_{n,univ}$, $\Lambda_{n,loca}$ and  $\Lambda_{n,glob}$ in Definition~\ref{def3} are  the coefficient regions of  $\mathcal{P}_{n,univ}(\overline{\mathbb{D}})$, $\mathcal{P}_{n,loca}(\overline{\mathbb{D}})$ and $\mathcal{P}_{n,glob}(\overline{\mathbb{D}})$ respectively. 

 In Subsection~\ref{subsec21}, we first determine the set $\Lambda_{3,loca}$. In Subsection~\ref{subsec22}, we express degree three polynomial solutions to (\ref{pg}) in terms of  their leading coefficient and conserved quantities in order to simplify our problem. In Subsection~\ref{subsec23}, we determine  $\Lambda_{3,glob}$ completely.
\subsection{Coefficient region of locally univalent degree three polynomials}
\label{subsec21}
The set $\Lambda_{3,loca}$ is determined as follows:
 \begin{thm}
\begin{equation}
\label{local}
   \Lambda_{3,loca}=\left\{\left(x_1,x_2,x_3\right)\in\mathbb{R}^{3}|\frac{x_1^2}{(1+3x_3)^2}+\frac{x_2^2}{(1-3x_3)^2}<\frac{1}{4}, 0< x_3<\frac{1}{3}\right\}.
  \end{equation}
   \end{thm}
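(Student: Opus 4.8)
My plan is to translate local univalence on $\overline{\mathbb{D}}$ into a zero-free condition on $f'$ and then analyze the zeros of a single quadratic. Writing $f(\zeta)=a_1\zeta+a_2\zeta^2+a_3\zeta^3$, membership in $\mathcal{P}_{3,loca}(\overline{\mathbb{D}})$ means exactly $a_1>0$, $a_3>0$, and $f'(\zeta)=a_1+2a_2\zeta+3a_3\zeta^2\neq 0$ for all $\zeta\in\overline{\mathbb{D}}$. Dividing by $a_1$ and setting $b=a_2/a_1=x_1+ix_2$ and $c=a_3/a_1=x_3>0$, the problem reduces to characterizing those $(b,c)$ with $c>0$ for which $g(\zeta)=3c\zeta^2+2b\zeta+1$ has no zero in $\overline{\mathbb{D}}$. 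Since $\Lambda(f)=(x_1,x_2,x_3)$, this is precisely the membership test that defines $\Lambda_{3,loca}$, so it suffices to identify the zero-free region in $(b,c)$-space.

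Two elementary observations do most of the work. First, the product of the two zeros of $g$ equals $1/(3c)$, so if both zeros are to lie strictly outside $\overline{\mathbb{D}}$ (which is what $g\neq 0$ on $\overline{\mathbb{D}}$ forces, via analytic continuation of the zeros), then $|\zeta_1||\zeta_2|=1/(3c)>1$, i.e. $0<x_3<1/3$; for $x_3\geq 1/3$ no $b$ can work and the region is empty. Second, a zero of $g$ lies on $\partial\mathbb{D}$ exactly when $g(e^{i\theta})=0$ for some $\theta$; solving this for $b$ gives $b=-\tfrac12\!\left[(1+3c)\cos\theta+i(3c-1)\sin\theta\right]$, whence $\cos\theta=-2x_1/(1+3c)$ and $\sin\theta=2x_2/(1-3c)$. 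Eliminating $\theta$ through $\cos^2\theta+\sin^2\theta=1$ yields the ellipse $\frac{x_1^2}{(1+3x_3)^2}+\frac{x_2^2}{(1-3x_3)^2}=\frac14$, which I expect to be the boundary of the region.

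It then remains to show that the zero-free set is exactly the open interior of this ellipse. I would argue by continuity and connectedness: the origin $b=0$ gives zeros of modulus $1/\sqrt{3c}>1$ (for $0<c<1/3$), so $b=0$ is zero-free and sits at the center of the ellipse, where the left side above equals $0<\tfrac14$. As $b$ varies, the number of zeros of $g$ inside $\mathbb{D}$ is a locally constant integer that can change only when a zero crosses $\partial\mathbb{D}$, i.e. only when $b$ crosses the ellipse. Crucially, because the product of the moduli is the constant $1/(3c)>1$, whenever one zero lies on $\partial\mathbb{D}$ the other has modulus $1/(3c)>1$; hence two zeros can never be on the circle at once (for $c\neq 1/3$), so each crossing is a single simple zero passing from exterior to interior. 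This pins the interior-zero count at $0$ throughout the component of zero-free $b$'s containing the origin, which is therefore the ellipse's interior, giving the strict inequality $<\tfrac14$. The main obstacle will be making this ``exactly one zero crosses'' step fully rigorous — I would back it up with an argument-principle/winding-number count of $g(e^{i\theta})$ about the origin (or a Rouché comparison), and separately treat the degenerate axis $c=1/3$, where one semi-axis collapses and the parametrization must be handled as a limiting case.
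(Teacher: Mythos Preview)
Your approach is correct and takes a genuinely different route from the paper's. The paper parametrizes \emph{both} zeros of $f'$ explicitly: since the product $\zeta_1\zeta_2=1/(3x_3)$ is real and positive, one may write $\zeta_1=tr\,e^{i\theta}$, $\zeta_2=r\,e^{-i\theta}$ with $t\geq 1$, $r>1$, express $(x_1,x_2,x_3)$ through Vieta in terms of $(t,r,\theta)$, and then verify the ellipse inequality via two elementary one-variable estimates $3x_3r(t+1)\leq 1+3x_3$ and $3x_3r(t-1)\leq 1-3x_3$ (both reduce to $(r-1)(tr\pm 1)\geq 0$). The converse is handled constructively by an intermediate-value argument producing the required $(t_0,r_0,\theta)$. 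You instead locate the \emph{boundary} of the zero-free region directly, by solving $g(e^{i\theta})=0$ for $b$, and then use local constancy of the interior-zero count to identify the zero-free region with the interior of the resulting ellipse. Your argument is shorter and more conceptual, at the price of a small topological step; the paper's argument is more hands-on but entirely elementary. Two small points: to finish your argument you should also pin the count on the \emph{exterior} of the ellipse (one large-$|b|$ check suffices, since then one zero is $\approx -1/(2b)\in\mathbb{D}$), which rules out zero-free $b$ outside the ellipse; and your proposed separate treatment of $c=1/3$ is unnecessary, as your product-of-zeros observation already forces $c<1/3$.
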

   \begin{proof}
   \begin{enumerate}
   \item[($\subseteq$)]
   Let $f(\zeta)=\zeta+(x_1+ix_2)\zeta^2+x_3\zeta^3\in \mathcal{P}_{3,loca}(\overline{\mathbb{D}})$. Assume $t(a+bi), a-bi$ are two zeros of $f^{'}(\zeta)=0$ where $a^2+b^2>1,  t\geq 1$. The relation between $x_1, x_2, x_3$ and $t, a, b$ is as follows:
    \begin{equation}  \left\{ \begin{array}{ll}
        t(a^2+b^2)&= \frac{1}{3x_3}\\
         ta+a&=\frac{-2x_1}{3x_3}\\
         tb-b&=\frac{-2x_2}{3x_3}\end{array} \right. 
         \end{equation}
   where $a^2+b^2>1$ and $t\geq 1$. Let $a=r\cos\theta, b=r\sin\theta, r>1,\theta\in [0,2\pi)$. Then
         \begin{equation}
         \label{eee}
          \left\{ \begin{array}{ll}
        -2x_1&= 3x_3r(t+1)\cos\theta\\
        -2x_2&=3x_3r(t-1)\sin\theta\\
       3x_3&=\frac{1}{ tr^{2}}\end{array} \right. 
       \end{equation}
       where $ r>1,t\geq 1, \theta\in [0,2\pi).$ In the right-hand side of (\ref{eee}), for any pair of $(t,r)$ where $t\geq 1, r>1, tr^2=\frac{1}{3x_3}$, the two values $3x_3r(t+1), 3x_3r(t-1)$ satisfy that
       \begin{equation}
       \label{eee1}
       3x_3r(t+1)\leq1+3x_3,\quad 3x_3r(t-1)\leq 1-3x_3.
       \end{equation}
       Applying (\ref{eee1}) to (\ref{eee}), we conclude that 
       \begin{equation}
        \label{local00}
         \frac{x_1^2}{(1+3x_3)^2}+\frac{x_2^2}{(1-3x_3)^2}<\frac{1}{4}.
         \end{equation}
         Since $3x_3=\frac{1}{tr^2}$ for some $t\geq 1, r>1$, $0<x_3<\frac{1}{3}$ must hold.
         \item[($\supseteq$)]
         Let  $f(\zeta)=\zeta+(x_{1}+ix_{2})\zeta^2+x_{3}\zeta^3$ where  $(x_1,x_2,x_3)$ satisfies (\ref{local00})  and $0<x_{3}<1/3$.  There exists $0< m<1$ such that
         \begin{equation}
         \label{mmm}
         \frac{x_{1}^2}{(1+3x_{3})^2}+\frac{x_{2}^2}{(1-3x_{3})^2}=\frac{1}{4}m^2.
         \end{equation}
         Since the case $x_{2}=0$ has been a known case as shown in Kuzenetsova and Tkachev \cite{local}, we assume $x_{2}\neq 0$ now.
         Define
         \[g(t)=\left(\frac{x_{1}}{3x_{3} r(t+1)}\right)^2+\left(\frac{x_{2}}{3x_{3}r(t-1)}\right)^2, \quad r^2 t=\frac{1}{3x_{3}},  t> 1,  r> 1.\]
          By using (\ref{mmm}), we can obtain that $g((1, \frac{1}{3x_{3}}))=(\frac{1}{4}m^2,\infty)$ and hence there exists $(r_{0},t_{0})$ satisfying $r_{0}>1, t_{0}>1, r_{0}^2t_{0}=\frac{1}{3x_{3}}$ and $g(t_{0})=\frac{1}{4}$. Therefore, there exists $(\cos\theta,\sin\theta),\theta\in [0,2\pi)$ such that
          \[  \left\{ \begin{array}{ll}
        -2x_{1}&= 3x_{3}r_{0}(t_{0}+1)\cos\theta\\
        -2x_{2}&=3x_{3}r_{0}(t_{0}-1)\sin\theta\end{array} \right. \] 
        where $r_{0}>1, t_{0}>1, r_{0}^2t_{0}=\frac{1}{3x_{3}}$. This implies that  $f^{'}(\zeta)=0$ has two zeros $\zeta_{1}=t_{0}r_{0}(\cos\theta+i\sin\theta)$ and $\zeta_{2}=r_{0}(\cos\theta-i\sin\theta)$ where $r_{0}>1, t_{0}>1,\theta\in [0,2\pi)$. Both zeros $|\zeta_{1}|, |\zeta_{2}|\geq r_{0}>1$ and hence $f(\zeta)\in \mathcal{P}_{3,loca}(\overline{\mathbb{D}})$. \end{enumerate}
         \end{proof}

         \subsection{Expressing degree three polynomial solutions in terms of  their leading coefficients and conserved moments }
         \label{subsec22}
          In 1972, Richardson~\cite{richardson} finds some conserved quantities in ZST Hele-Shaw flow driven by injection with the strength $2\pi$. He shows that for moving  Hele-Shaw cell $\Omega(t)$, the complex moments
        \begin{equation}
        M_{k}(t)=\frac{1}{2\pi}\int_{\Omega(t)}\mathbf{z}^{k}d\mathbf{x}d\mathbf{y},  \mathbf{z}=\mathbf{x}+i\mathbf{y}, k\geq 0
        \end{equation}
        satisfy $M_{k}(t)=M_{k}(0), k\geq 1$ and $M_{0}(t)=2t+M_{0}(0)$. In the case that $\Omega(t)=f(\mathbb{D},t)$ where $f(\zeta,t)=\sum_{j=1}a_{j}(t)\zeta^{j}\in\mathcal{O}_{univ}(\overline{\mathbb{D}})$,  these moments $M_{k}(t), k\geq 0$ can be expressed in terms of $a_{j}(t), j\geq 1$ as follows:
        \begin{equation}
        \label{moment}
       M_{k}(t)=\sum_{j_{1},\cdots,j_{k+1}}j_{1}a_{j_{1}}(t)\cdots a_{j_{k+1}}(t)\overline{a_{j_{1}+\cdots+j_{k+1}}(t)}.
        \end{equation}
       Since this paper  mainly deals with polynomials,  we define the following notation for any $f\in  \mathcal{P}_{n}$ in correspondence to (\ref{moment}).
        \begin{defn}
        \label{momentdef}
        For given $ f=\sum_{j=1}^{n}a_{j}\zeta^{j}\in \mathcal{P}_{n}$, we define
       \begin{equation}
       M_{k}(f)=\sum_{j_{1},\cdots,j_{k+1}}j_{1}a_{j_{1}}\cdots a_{j_{k+1}}\overline{a_{j_{1}+\cdots+j_{k+1}}}. 
       \end{equation}
       \end{defn}
                In the rest of this work, without loss of generality, we consider only initial functions $f(\zeta,0)$ where $f^{'}(0,0)=1$. \par
                Lemma \ref{initial07} is useful throughout this paper.         
         \begin{lemma}
         \label{initial07}
         Given $f\in\mathcal{P}_{3}$ and $f^{'}(0)=1$. Let $M_{1}(f)=p+iq, M_{2}(f)=M_{2}$. Then
          \begin{equation}
         \label{initial0}
         \Lambda(f)=\left(\frac{p}{1+3M_{2}},\frac{-q}{1-3M_{2}},M_{2}\right)
         \end{equation}
         where the mapping  $\Lambda$ is defined as in Definition \ref{def3}.
         \end{lemma}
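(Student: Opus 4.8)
The plan is to prove the identity by directly evaluating the moment functionals $M_1(f)$ and $M_2(f)$ from Definition~\ref{momentdef} on a general element of $\mathcal{P}_3$ and then inverting the resulting relations to read off the coordinates of $\Lambda(f)$. Write $f(\zeta)=\zeta+a_2\zeta^2+a_3\zeta^3$: the normalization $f'(0)=1$ forces $a_1=1$, and the defining condition of $\mathcal{P}_3$ forces $a_3>0$ to be real, so $\overline{a_3}=a_3$. Put $a_2=x_1+ix_2$, so that by the definition of $\Lambda$ we have $\Lambda(f)=(x_1,x_2,a_3)$; the task is then to express $(x_1,x_2,a_3)$ in terms of $p$, $q$ and $M_2$.

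First I would compute $M_2(f)$, which is the easier of the two. In the sum defining $M_2(f)$ the indices $j_1,j_2,j_3$ each satisfy $j_i\ge 1$ and must also obey $j_1+j_2+j_3\le 3$ in order that $\overline{a_{j_1+j_2+j_3}}$ be a genuine coefficient of $f$; the only admissible triple is $(1,1,1)$. Hence $M_2(f)=a_1^3\overline{a_3}=a_3$, which immediately yields the third coordinate $a_3=M_2$. Next I would compute $M_1(f)$ by enumerating the pairs $(j_1,j_2)$ with $j_i\ge 1$ and $j_1+j_2\le 3$, namely $(1,1),(1,2),(2,1)$, keeping the weight $j_1$ and the conjugation. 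This gives $M_1(f)=a_1^2\overline{a_2}+3a_1a_2\overline{a_3}$. Substituting $a_1=1$, $a_3=M_2$ (real) and $a_2=x_1+ix_2$ and separating real and imaginary parts produces the two real linear relations $p=Re\,M_1(f)=x_1(1+3M_2)$ and $q=Im\,M_1(f)=x_2(3M_2-1)$. Solving for $x_1$ and $x_2$ gives $x_1=\tfrac{p}{1+3M_2}$ and $x_2=\tfrac{-q}{1-3M_2}$, which together with $a_3=M_2$ is exactly the asserted formula for $\Lambda(f)$ in \eqref{initial0}.

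The computation is essentially routine, so the only points demanding care are the correct determination of which index tuples contribute to each moment sum (so that no term is silently dropped or double-counted) and the bookkeeping of the weight $j_1$ together with the complex conjugates. It is precisely the reality of $a_3$ that makes the relations for $p$ and $q$ decouple into two independent one-variable equations, rather than mixing $x_1$ and $x_2$; this is the step I would check most carefully. Finally one should observe that the inversion is legitimate because the denominators $1\pm 3M_2$ do not vanish in the regime of interest: since $M_2=a_3$ and the locally univalent range is $0<a_3<1/3$, both $1+3M_2>0$ and $1-3M_2>0$, so the formula is well defined.
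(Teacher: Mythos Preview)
Your proof is correct and follows essentially the same route as the paper: both compute $M_1(f)=a_1^2\overline{a_2}+3a_1a_2\overline{a_3}$ and $M_2(f)=a_1^3\overline{a_3}$ directly from Definition~\ref{momentdef}, set $a_1=1$, use the reality of $a_3$, and invert. Your write-up is in fact more explicit than the paper's (which simply states the moment formulas and then says ``one can do explicit calculation''), and your closing remark on the nonvanishing of the denominators is a welcome addition.
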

         \begin{proof}
         Let $f=\sum_{j=1}^{3}a_{j}\zeta^{j}\in \mathcal{P}_{3}$. Then
         \[  \left\{ \begin{array}{ll}
        M_{1}(f)&= a_{1}^2\overline{a_{2}}+3a_{1}a_{2}a_{3}\\
        M_{2}(f)&=a_{1}^3a_{3}.\end{array} \right. \] 
    One can do explicit calculation and obtains that
   \begin{equation}
   \label{eq3}
        a_{2}=\frac{pa_{1}^2}{a_{1}^4+3M_{2}}+i\frac{-qa_{1}^2}{a_{1}^4-3M_{2}},\quad a_{3}=\frac{M_{2}}{a_{1}^3}.
        \end{equation}  
        Since $a_{1}=1$, we have
         \begin{equation}
         \label{initial0}
         \Lambda(f)=\left(\frac{p}{1+3M_{2}},\frac{-q}{1-3M_{2}},M_{2}\right).
         \end{equation}
         \end{proof}
        In order to simplify our problem, we  characterize these functions in $ \mathcal{P}_{3,glob}(\overline{\mathbb{D}})$ in terms of their conserved moments in Theorem \ref{gusta}.
        \begin{thm}
        \label{gusta}
        Let $p+iq=M_{1}(f(\cdot,0)), M_{2}=M_{2}(f(\cdot,0))$,$f(\zeta,0)\in \mathcal{P}_{3}, f^{'}(0,0)=1$. Then the following are equivalent.
        \begin{enumerate}
        \item [(a)] $f(\zeta,0)$ gives rise to a global univalent solution to (\ref{pg}).
        \item [(b)]$
        K=\left\{\left(\frac{p\tau}{\tau^4+3M_{2}}, \frac{-q\tau}{\tau^4-3M_{2}}, \frac{M_{2}}{\tau^4}\right)\mid \tau\geq 1\right\}\subset\Lambda_{3,loca}.$
        \item [(c)]$\frac{p^2\tau^{10}}{(\tau^4+3M_{2})^4}+\frac{q^2\tau^{10}}{(\tau^4-3M_{2})^4}<\frac{1}{4},\quad\forall \tau\geq 1.$
        \end{enumerate}
        \end{thm}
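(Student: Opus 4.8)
The plan is to identify the set $K$ with the trajectory that the coefficients of the evolving solution trace out once the solution is reparametrized by its leading coefficient, and then to read off global existence from whether that trajectory stays inside $\Lambda_{3,loca}$. First I would set up the reparametrization and dispose of (b)$\Leftrightarrow$(c). By Gustafsson~\cite{gustaf1} a degree three polynomial solution stays a degree three polynomial, and by Richardson's conservation laws $M_1(f(\cdot,t))=p+iq$ and $M_2(f(\cdot,t))=M_2$ are constant in $t$. Writing $\tau=a_1(t)$ for the positive leading coefficient and keeping the formulas from the proof of Lemma~\ref{initial07} with $a_1$ general rather than set to $1$, namely (\ref{eq3}), one gets $a_2(t)$ and $a_3(t)$ as explicit functions of $\tau$; substituting into $\Lambda$ gives $\Lambda(f(\cdot,t))=\left(\frac{p\tau}{\tau^4+3M_2},\frac{-q\tau}{\tau^4-3M_2},\frac{M_2}{\tau^4}\right)$. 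Thus the point of $K$ with parameter $\tau$ is exactly $\Lambda$ of the unique degree three polynomial carrying the prescribed conserved moments and leading coefficient $\tau$. Feeding this point into (\ref{local}) and using $1\pm 3x_3=(\tau^4\pm 3M_2)/\tau^4$ turns the defining inequality of $\Lambda_{3,loca}$ into precisely (c); the side condition $0<x_3<1/3$ becomes $0<M_2<\tau^4/3$, which holds for every $\tau\geq 1$ because local univalence of the initial datum forces $0<M_2<1/3$ by (\ref{local}). This yields (b)$\Leftrightarrow$(c).

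The crux is to control $\tau=a_1(t)$ as a function of time, and I expect this to be the \emph{main obstacle}. Using the area law $M_0(f(\cdot,t))=2t+M_0(0)$ together with $M_0=|a_1|^2+2|a_2|^2+3|a_3|^2$, I introduce $\phi(\tau)=\tau^2+2|a_2(\tau)|^2+3|a_3(\tau)|^2$, so that $\phi(a_1(t))=2t+M_0(0)$ along the solution. The key claim is that $\phi'(\tau)>0$ at every $\tau\geq 1$ for which (c) holds. Since $\tau^4>3M_2$ for $\tau\geq 1$, each of $2|a_2(\tau)|^2$ and $3|a_3(\tau)|^2$ is decreasing, so $\phi'(\tau)=2\tau$ minus a nonnegative quantity; the point is that (c) bounds that quantity by $2\tau$. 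A computation shows its value to be $\frac{8(\tau^8-9M_2^2)}{\tau^7}\left[\frac{p^2\tau^{10}}{(\tau^4+3M_2)^4}+\frac{q^2\tau^{10}}{(\tau^4-3M_2)^4}\right]+\frac{18M_2^2}{\tau^7}$, and using the strict bound $<\frac{1}{4}$ for the bracket from (c) shows this quantity is $<2\tau$ (the $M_2^2$ terms cancelling), whence $\phi'(\tau)>0$. In particular $\dot a_1(t)=2/\phi'(a_1(t))>0$ whenever the solution is locally univalent.

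Finally I would assemble (a)$\Leftrightarrow$(b). A locally univalent polynomial solution persists exactly as long as $f(\cdot,t)\in\mathcal{P}_{3,loca}(\overline{\mathbb{D}})$, i.e. as long as $\Lambda(f(\cdot,t))\in\Lambda_{3,loca}$, and by Theorem 4.1 in Gustafsson and Lin~\cite{gustaf4} a global locally univalent solution is automatically global univalent, while a global univalent solution is trivially global locally univalent. For (a)$\Rightarrow$(b): global univalence gives $\Lambda(f(\cdot,t))\in\Lambda_{3,loca}$ for all $t\geq 0$, so by the crux $a_1$ is strictly increasing; were it bounded, $\phi(a_1(t))=2t+M_0(0)$ would stay bounded, a contradiction, so $a_1(t)\to\infty$, and continuity makes $a_1$ a surjection of $[0,\infty)$ onto $[1,\infty)$. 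Every $\tau\geq 1$ is therefore realized along the solution and lies in $\Lambda_{3,loca}$, which is (b). For (b)$\Rightarrow$(a): if $K\subset\Lambda_{3,loca}$ then (c), and hence $\phi'>0$, holds on all of $[1,\infty)$, so $\phi$ is an increasing bijection and $a_1(t)=\phi^{-1}(2t+M_0(0))$ is well defined and increases from $1$ to $\infty$; the associated solution stays in $\mathcal{P}_{3,loca}(\overline{\mathbb{D}})$ for every $t\geq 0$, so it never blows up, is global locally univalent, and by Theorem 4.1 in~\cite{gustaf4} global univalent. This closes the cycle (a)$\Rightarrow$(b)$\Rightarrow$(c)$\Rightarrow$(b)$\Rightarrow$(a).
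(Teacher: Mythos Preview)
Your argument is correct and follows the same overall strategy as the paper's: identify $K$ with the $\Lambda$-trajectory of the $\tau$-reparametrized family, reduce (b)$\Leftrightarrow$(c) to the description (\ref{local}) of $\Lambda_{3,loca}$, and for (a)$\Leftrightarrow$(b) combine monotone surjectivity of $a_1(t)$ onto $[1,\infty)$ with Theorem~4.1 of \cite{gustaf4}. The difference is one of self-containment rather than method. The paper handles (a)$\Rightarrow$(b) by citing Howison--Hohlov for the fact that $a_1(t)$ is increasing and onto $[1,\infty)$, and for (b)$\Rightarrow$(a) simply asserts that $\operatorname{Re}[F_\tau\overline{F'\zeta}]=Q(\tau)>0$; you instead prove both by the single computation $\phi'(\tau)>0$ under (c), where $\phi(\tau)=M_0$ along the family. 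Since integrating the reparametrized equation over $|\zeta|=1$ gives $Q(\tau)=\tfrac12\phi'(\tau)$, your calculation is exactly the missing verification of the paper's positivity claim, and the area law $\phi(a_1(t))=2t+M_0(0)$ then yields the monotone surjectivity without appeal to \cite{classification}. One small point worth making explicit in your (b)$\Rightarrow$(a): the reason the function $t\mapsto\phi^{-1}(2t+M_0(0))$ together with the moment formulas actually solves (\ref{pg}) is that for degree-three polynomials the Polubarinova--Galin equation is \emph{equivalent} to the Richardson relations $\dot M_0=2$, $\dot M_1=0$, $\dot M_2=0$ (equate Fourier coefficients on $|\zeta|=1$); the paper encodes this by observing that $\operatorname{Re}[F_\tau\overline{F'\zeta}]$ is independent of $\zeta$.
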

         \begin{proof}\par
     \begin{enumerate}
     \item $((a)\Longrightarrow(b))$ Let $f(\zeta,t)=\sum_{j=1}^{3}a_{j}(t)\zeta^{j}$ be the global univalent solution to (\ref{pg}). Then by (\ref{eq3}), 
     \begin{equation}
        a_{2}(t)=\frac{pa_{1}^2(t)}{a_{1}^4(t)+3M_{2}}+i\frac{-qa_{1}^2(t)}{a_{1}^4(t)-3M_{2}},\quad a_{3}(t)=\frac{M_{2}}{a_{1}^3(t)}.
        \end{equation}
    Since $a_{1}(t):[0,\infty)\rightarrow [1,\infty)$ is an onto, increasing function from Howison and Hohlov~\cite{classification} and $\Lambda(f(\cdot,t))\subset\Lambda_{3,univ}$, we obtain that $K\subset\Lambda_{3,loca}$.
     \item $((b)\Longrightarrow (a))$ Assume the trajectory $K\subset\Lambda_{3,loca}$. Then $F(\zeta,\tau)=\tau\zeta+(\frac{p\tau^2}{\tau^4+3M_{2}}+i\frac{-q\tau^2}{\tau^4-3M_{2}})\zeta^2+\frac{M_{2}}{\tau^3}\zeta^3$ satisfies 
     \begin{equation}
  Re\left[F_{\tau}(\zeta,\tau)\overline{F^{'}(\zeta,\tau)\zeta}\right]=Q(\tau)>0, \zeta\in\partial \mathbb{D}.
  \end{equation}
  There exists $a_{1}(t):[0,\infty)\rightarrow [1,\infty)$ such that $\tau=a_{1}(t)$ and $\frac{d}{dt}a_{1}(t)=\frac{1}{Q(\tau)}$. Let
   $f(\zeta,t)=F(\zeta,\tau)$. Then $f(\zeta,t)$ satisfies (\ref{pg}). This solution  $f(\zeta,t)$ is a global locally univalent solution to (\ref{pg}) and hence a global univalent solution to (\ref{pg}) by Theorem 4.1 in Gustafsson and Lin~\cite{gustaf4}.
   \item  $((b)\Longleftrightarrow (c))$   This follows from Theorem~\ref{local} directly.   \end{enumerate}
     \end{proof}
          
         \subsection{Coefficient region of initial functions of global univalent degree three polynomial solutions}
         \label{subsec23}
         In this subsection, we  determine $\Lambda_{3,glob}$ completely in Theorem~\ref{final}. \par
      Let $g_{1,s}(\tau), g_{2,s}(\tau)$ be
        \begin{equation}
        \label{kk}
        g_{1,s}(\tau)=\frac{(5s+\tau^4)(\tau^4+3s)^5}{64s\tau^{14}},\quad g_{2,s}(\tau)=\frac{(5s-\tau^4)(\tau^4-3s)^5}{64s\tau^{14}}.
        \end{equation}
 In Lemma \ref{ggg11},  we would like to analyze  properties of a curve 
          \begin{equation}
          \label{curve}
      g_{s}(\tau)=\left(\frac{g_{1,s}^{\frac{1}{2}}(\tau)}{1+3s},\frac{g_{2,s}^{\frac{1}{2}}(\tau)}{1-3s}\right),\quad 1\leq \tau\leq (5s)^{\frac{1}{4}}, \frac{1}{5}< s<\frac{1}{3}.
       \end{equation}
       A part of the set $\{(g_{s}(\tau),s)|1\leq \tau\leq (5s)^{\frac{1}{4}}, 1/5< s<1/3\}$ represents  the curve  $\partial\Lambda_{3,glob}\cap\Lambda_{3,loca}$  in the first octant  in Theorem \ref{final}.  \par
         \begin{lemma}
        \label{ggg11}
           Let $\tau^{*}(s)=( \max\{1,\sqrt{21}s\})^{\frac{1}{4}}, 1/5< s<1/3$. The following are true.      
        \begin{enumerate}
       \item[(a)] For $1/5<s\leq 1/\sqrt{21}$, 
       \[\left\{(g_{s}(\tau),s)|\tau\in(1, (5s)^{\frac{1}{4}}]\right\}\subset \Lambda_{3,loca}.\]
       For $1/\sqrt{21}<s<1/3$, there exists $\tau^{**}(s)\in(\tau^{*}(s),1/3)$ such that 
       \begin{equation}
       \label{iso}
       \left\{(g_{s}(\tau),s)|\tau\in(1, \tau^{**}(s))\right\}\subset(\Lambda_{3,loca})^{c},\left\{(g_{s}(\tau),s)|\tau\in(\tau^{**}(s), (5s)^{\frac{1}{4}}]\right\}\subset\Lambda_{3,loca}
        \end{equation}
        and $(g_{s}(\tau^{**}(s)),s)\in\partial\Lambda_{3,loca}$.
       \item[(b)]  For $1/5<s<1/3$,\\
        \[ \left\{(g_{s}(\tau),s)|\tau\in[1, (5s)^{\frac{1}{4}}]\right\}\cap\Lambda_{3,loca}=\left\{(g_{s}(\tau),s)|\tau\in[\tau^{*}(s),(5s)^{\frac{1}{4}}]\right\}\cap\Lambda_{3,loca}.\]
            \item[$(c)$] For $1/5<s<1/3$, $\left\{g_{s}(\tau)|\tau\in [\tau^{*}(s),(5s)^{\frac{1}{4}}]\right\}$ is a monotonic curve.
       \item [$(d)$] For $1/5<s<1/3$, $g_{s}(\tau_{1})\neq g_{s}(\tau_{2})$ for $\tau_{1}\neq\tau_{2}$ and $\tau_{1},\tau_{2}\in[\tau^{*}(s),(5s)^{\frac{1}{4}}]$.
       \       \end{enumerate}
         \end{lemma}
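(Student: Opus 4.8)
The plan is to reduce all four parts to a one–variable study, for fixed $s$, of $g_{1,s},g_{2,s}$ and of the single quantity that measures membership in $\Lambda_{3,loca}$. Since $\Lambda_{3,loca}$ is the open set cut out by the inequality in Theorem~\ref{local} and since $0<s<1/3$ holds automatically, the point $(g_s(\tau),s)$ lies in $\Lambda_{3,loca}$ exactly when
\[h_s(\tau):=\frac{g_{1,s}(\tau)}{(1+3s)^4}+\frac{g_{2,s}(\tau)}{(1-3s)^4}<\frac14 .\]
(On $[1,(5s)^{1/4}]$ one has $3s<1\le\tau^4\le 5s$, so $g_{1,s},g_{2,s}\ge 0$ and the curve is real and well defined.) Writing $u=\tau^4$, the first step is to compute the logarithmic derivatives and observe that their numerators collapse to one factor:
\[\tau\,\frac{g_{1,s}'}{g_{1,s}}=\frac{10\,(u^2-21s^2)}{(u+5s)(u+3s)},\qquad \tau\,\frac{g_{2,s}'}{g_{2,s}}=\frac{-10\,(u^2-21s^2)}{(5s-u)(u-3s)} .\]
Because the denominators are positive on the relevant range, $\operatorname{sgn}g_{1,s}'=\operatorname{sgn}(\tau^4-\sqrt{21}\,s)$ and $\operatorname{sgn}g_{2,s}'=-\operatorname{sgn}(\tau^4-\sqrt{21}\,s)$; this is precisely what makes $\tau^{*}(s)=(\max\{1,\sqrt{21}s\})^{1/4}$ the correct threshold. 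Parts $(c)$ and $(d)$ then follow at once: on $[\tau^{*},(5s)^{1/4}]$ the first coordinate $g_{1,s}^{1/2}/(1+3s)$ is strictly increasing and the second $g_{2,s}^{1/2}/(1-3s)$ is strictly decreasing, so the arc is monotone, and strict monotonicity of the first coordinate gives injectivity.

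Next I would differentiate $h_s$ directly. Feeding the two formulas above into $h_s'=g_{1,s}'/(1+3s)^4+g_{2,s}'/(1-3s)^4$, the factor $u^2-21s^2$ again survives and the remaining piece simplifies to a difference of fourth powers,
\[h_s'(\tau)=\frac{10\,(u^2-21s^2)}{64\,s\,\tau\,u^{7/2}}\left[\left(\frac{u+3s}{1+3s}\right)^{4}-\left(\frac{u-3s}{1-3s}\right)^{4}\right].\]
A short computation identifies the sign of the bracket with that of $(u+3s)(1-3s)-(u-3s)(1+3s)=6s(1-u)$, which is negative for $\tau>1$. Hence, for $\tau>1$, $\operatorname{sgn}h_s'=-\operatorname{sgn}(\tau^4-\sqrt{21}\,s)$, so $h_s$ increases on $(1,\tau^{*})$ and decreases on $(\tau^{*},(5s)^{1/4})$ when $\tau^{*}>1$, and is simply decreasing when $\tau^{*}=1$.

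To finish $(a)$ and $(b)$ I would record the two boundary values $h_s(1)=\frac14$ (so $g_s(1)\in\partial\Lambda_{3,loca}$) and $h_s\big((5s)^{1/4}\big)=\frac{5120}{5^{7/2}}\,s^{3/2}/(1+3s)^4$, and verify $h_s\big((5s)^{1/4}\big)<\frac14$ on the range of $s$. For $1/5<s\le1/\sqrt{21}$ one has $\tau^{*}=1$, and $h_s$ decreases from $h_s(1)=\frac14$, so $h_s<\frac14$ throughout $(1,(5s)^{1/4}]$; this is the first assertion of $(a)$. For $1/\sqrt{21}<s<1/3$, $h_s$ first increases from $\frac14$ (so $h_s>\frac14$ on $(1,\tau^{*}]$, i.e.\ the curve is outside $\Lambda_{3,loca}$) and then strictly decreases; since it ends below $\frac14$, there is a unique $\tau^{**}\in(\tau^{*},(5s)^{1/4})$ with $h_s(\tau^{**})=\frac14$, yielding $(g_s(\tau^{**}),s)\in\partial\Lambda_{3,loca}$ and the inside/outside split of $(a)$. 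Part $(b)$ is then immediate: for $\tau^{*}=1$ the two sides coincide, and for $\tau^{*}>1$ the extra range $[1,\tau^{*})\subset[1,\tau^{**})$ contributes no points of $\Lambda_{3,loca}$ by $(a)$.

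The two places that require real care are the algebraic simplifications producing the single factor $u^2-21s^2$ in the first two steps---this is the engine behind all four parts and pins $\tau^{*}$ to $(\sqrt{21}\,s)^{1/4}$---and the endpoint estimate $h_s\big((5s)^{1/4}\big)<\frac14$, equivalently $\frac{20480}{5^{7/2}}\,s^{3/2}<(1+3s)^4$, which is exactly what guarantees that $\tau^{**}$ exists. I expect the latter to be the hardest part, since the inequality is tight as $s\downarrow1/\sqrt{21}$ and so needs a careful rather than a crude estimate.
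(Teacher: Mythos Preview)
Your approach is essentially identical to the paper's: both reduce membership in $\Lambda_{3,loca}$ to the single function $h_s(\tau)=g_{1,s}/(1+3s)^4+g_{2,s}/(1-3s)^4$, extract the common factor $\tau^8-21s^2$ from the derivatives of $g_{1,s},g_{2,s}$ and of $h_s$, and then combine the sign pattern of $h_s'$ with $h_s(1)=\tfrac14$ and the endpoint value $h_s((5s)^{1/4})<\tfrac14$ to handle the two $s$-ranges. One small correction to your closing remark: the endpoint inequality $\frac{20480}{5^{7/2}}\,s^{3/2}<(1+3s)^4$ is in fact an \emph{equality} at $s=1/5$ (where $(5s)^{1/4}=1$), not at $s=1/\sqrt{21}$; since the ratio of the two sides has its unique maximum at $s=1/5$ and is strictly decreasing thereafter, this observation actually yields a clean proof of the inequality on the whole range $1/5<s<1/3$.
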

           \begin{proof}
    The following three facts are going to be used in the rest of the proof.
    \begin{enumerate}
    \item By Theorem \ref{local}
      \begin{equation}
      \label{eqq1}
    (g_{s}(\tau),s)\in\Lambda_{3,loca}\Longleftrightarrow \left[\frac{g_{1,s}(\tau)}{(1+3s)^4}+\frac{g_{2,s}(\tau)}{(1-3s)^4}\right] <\frac{1}{4}, \frac{1}{5}<s<\frac{1}{3}.
    \end{equation}
    \item
       \begin{equation}
       \label{g1}
       \frac{d}{d\tau}\left[g_{1,s}(\tau)\right]=\frac{5}{32s}\frac{\left[\tau^4+3s\right]^4\left[\tau^8-21s^2\right]}{\tau^{15}},
       \end{equation}
       \begin{equation}
       \label{g2}
       \frac{d}{d\tau}\left[g_{2,s}(\tau)\right]=-\frac{5}{32s}\frac{\left[\tau^4-3s\right]^4\left[\tau^8-21s^2\right]}{\tau^{15}}.
       \end{equation}
       \item
       \begin{equation}
       \label{qqq}
      \begin{array}{ll}
        & \frac{d}{d\tau}\left[\frac{g_{1,s}(\tau)}{(1+3s)^4}+\frac{g_{2,s}(\tau)}{(1-3s)^4}\right] \\
         =&\frac{5(\tau^8-21s^2)}{32s\tau^{15}}\left\{\frac{(\tau^4+3s)^4(1-3s)^4-(\tau^4-3s)^4(1+3s)^4}{(1+3s)^4(1-3s)^4}\right\}\\
           =&\left\{ \begin{array}{ll}
         \leq 0 & \mbox{if $\tau\geq (\sqrt{21}s)^{\frac{1}{4}}$};\\
        \geq 0& \mbox{if $\tau\leq (\sqrt{21}s)^{\frac{1}{4}}$}.\end{array} \right.
        \end{array}
        \end{equation}
        \end{enumerate}
        We separate into two cases $1/5<s\leq 1/\sqrt{21}$ and $1/\sqrt{21}<s<1/3$.\par
        Assume $1/5<s\leq1/\sqrt{21}$ now. By the fact that $[\frac{g_{1,s}(\tau)}{(1+3s)^4}+\frac{g_{2,s}(\tau)}{(1-3s)^4}] =1/4$ as $\tau=1$ and (\ref{qqq}),
        we have that  $[\frac{g_{1,s}(\tau)}{(1+3s)^4}+\frac{g_{2,s}(\tau)}{(1-3s)^4}] <1/4, 1<\tau\leq(5s)^{1/4}$ and hence $(g_{s}(\tau),s)\in\Lambda_{3,loca}$ for $1<\tau\leq (5s)^{1/4}$. From (\ref{g1}) and (\ref{g2}), we can conclude  the monotonicity of the curve $\{g_{s}(\tau)|1\leq\tau\leq(5s)^{1/4}\}$ and that this trajectory $g_{s}(\tau), 1\leq\tau\leq(5s)^{1/4}$ never self-intersects . Hence $(a)-(d)$ hold for $1/5<s\leq 1/\sqrt{21}$.\par
        Assume $1/\sqrt{21}<s<1/3$. By the fact that $[\frac{g_{1,s}(\tau)}{(1+3s)^4}+\frac{g_{2,s}(\tau)}{(1-3s)^4}] =1/4$ as $\tau=1$ and (\ref{qqq}),
        we have that  $[\frac{g_{1,s}(\tau)}{(1+3s)^4}+\frac{g_{2,s}(\tau)}{(1-3s)^4}] >1/4, 1<\tau\leq(\sqrt{21}s)^{1/4}$ and hence $(g_{s}(\tau),s)\notin\Lambda_{3,loca}$ for $1\leq\tau\leq(\sqrt{21}s)^{1/4}$. Since $[\frac{g_{1,s}(\tau)}{(1+3s)^4}+\frac{g_{2,s}(\tau)}{(1-3s)^4}] <1/4$ as $\tau=(5s)^{1/4}$, there exists $\tau^{**}(s)\in ((\sqrt{21}s)^{1/4},(5s)^{1/4})$ such that (\ref{iso}) holds. From (\ref{g1}) and (\ref{g2}), we can conclude the monotonicity of the curve $\{g_{s}(\tau)| (\sqrt{21}s)^{1/4}\leq\tau\leq(5s)^{1/4}\}$ and that this trajectory $g_{s}(\tau), (\sqrt{21}s)^{1/4}\leq\tau\leq(5s)^{1/4}$ never self-intersects. Hence $(a)-(d)$ hold for $1/\sqrt{21}<s<1/3$.\par
        By the above facts for both cases  $1/5<s\leq 1/\sqrt{21}$ and $1/\sqrt{21}<s<1/3$, we can conclude $(a)-(d)$ hold.
        
             \end{proof}
             
             \par
             With Lemma \ref{main}, we would be able to determine $\Lambda_{3,glob}$ completely after obtaining  $\partial\Lambda_{3,glob}$.
                 \begin{lemma}
        \label{main}
        Assume $f(\zeta,0)=\zeta+a_{2}(0)\zeta^2+a_{3}(0)\zeta^3\in\mathcal{P}_{3,loca}(\overline{\mathbb{D}})$ fails to give rise to a global univalent solution to (\ref{pg}). For  $b_{2}(0), b_{3}(0)$ satisfying $|Re(b_{2}(0))|\geq |Re(a_{2}(0))|$, $|Im(b_{2}(0))|\geq|Im(a_{2}(0))|$ and $b_{3}(0)=a_{3}(0)$,   $g(\zeta,0)=\zeta+b_{2}(0)\zeta^2+b_{3}(0)\zeta^3$ also fails to give rise to a global univalent solution to (\ref{pg}).        \end{lemma}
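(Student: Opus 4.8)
The plan is to reduce the statement to the explicit inequality of Theorem~\ref{gusta}(c), which turns ``failing to give rise to a global univalent solution'' into a condition that is monotone in the squared moment data. First I would record how the coefficients of $f$ determine its moments. Writing $a_{2}(0)=\alpha+i\beta$ with $\alpha,\beta\in\mathbb{R}$ and using $a_{1}(0)=1$, the computation in the proof of Lemma~\ref{initial07} gives $M_{1}(f(\cdot,0))=\overline{a_{2}(0)}+3a_{2}(0)a_{3}(0)$ and $M_{2}=a_{3}(0)$. Since $f\in\mathcal{P}_{3,loca}(\overline{\mathbb{D}})$ forces $0<M_{2}<1/3$ by Theorem~\ref{local}, the value $M_{2}$ is real and positive, so
\[
p=Re\,M_{1}(f(\cdot,0))=\alpha(1+3M_{2}),\qquad q=Im\,M_{1}(f(\cdot,0))=-\beta(1-3M_{2}),
\]
and therefore $p^{2}=(Re\,a_{2}(0))^{2}(1+3M_{2})^{2}$ and $q^{2}=(Im\,a_{2}(0))^{2}(1-3M_{2})^{2}$.

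Next I would translate the hypotheses on $b_{2}(0),b_{3}(0)$ into the same language. Because $b_{3}(0)=a_{3}(0)$, the functions $f$ and $g$ share the same value $M_{2}\in(0,1/3)$; in particular $g\in\mathcal{P}_{3}$ with $g^{'}(0,0)=1$, so Theorem~\ref{gusta} applies verbatim to $g$. Denoting by $p',q'$ the moment data of $g$, the formulas above together with $|Re\,b_{2}(0)|\geq|Re\,a_{2}(0)|$ and $|Im\,b_{2}(0)|\geq|Im\,a_{2}(0)|$ yield, using $(1\pm 3M_{2})^{2}>0$,
\[
(p')^{2}=(Re\,b_{2}(0))^{2}(1+3M_{2})^{2}\geq p^{2},\qquad (q')^{2}=(Im\,b_{2}(0))^{2}(1-3M_{2})^{2}\geq q^{2}.
\]

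Now I would invoke Theorem~\ref{gusta}. Since $f$ fails to give rise to a global univalent solution, the equivalence $(a)\Leftrightarrow(c)$ shows that (c) fails for $f$, i.e.\ there is some $\tau_{0}\geq 1$ with
\[
\frac{p^{2}\tau_{0}^{10}}{(\tau_{0}^{4}+3M_{2})^{4}}+\frac{q^{2}\tau_{0}^{10}}{(\tau_{0}^{4}-3M_{2})^{4}}\geq\frac{1}{4}.
\]
For every $\tau\geq 1$ and $0<M_{2}<1/3$ one has $\tau^{4}\geq 1>3M_{2}$, so both denominators are strictly positive and the weights $\tau^{10}/(\tau^{4}\pm 3M_{2})^{4}$ are positive. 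Replacing $p^{2},q^{2}$ by the larger quantities $(p')^{2},(q')^{2}$ therefore only increases the left-hand side, so the same $\tau_{0}$ satisfies
\[
\frac{(p')^{2}\tau_{0}^{10}}{(\tau_{0}^{4}+3M_{2})^{4}}+\frac{(q')^{2}\tau_{0}^{10}}{(\tau_{0}^{4}-3M_{2})^{4}}\geq\frac{1}{4}.
\]
Hence (c) fails for $g$, and by $(a)\Leftrightarrow(c)$ the function $g$ fails to give rise to a global univalent solution, which is what we wanted.

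The whole argument is just this monotonicity observation, so I do not expect a serious obstacle once Theorem~\ref{gusta}(c) is available. The only steps requiring care are the bookkeeping that converts the sign-free hypotheses on $Re\,b_{2}(0)$ and $Im\,b_{2}(0)$ into the inequalities $(p')^{2}\geq p^{2}$ and $(q')^{2}\geq q^{2}$, and the verification that the $\tau$-dependent weights stay positive for all $\tau\geq 1$ so that the pointwise comparison between the two sums is legitimate.
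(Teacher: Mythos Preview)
Your proof is correct and follows essentially the same route as the paper: both compute the moments $p,q$ via Lemma~\ref{initial07}, convert the hypotheses on $b_{2}(0)$ into $(p')^{2}\geq p^{2}$ and $(q')^{2}\geq q^{2}$, and then use the failure of the inequality in Theorem~\ref{gusta}(c) at some $\tau_{0}\geq 1$ together with monotonicity in $p^{2},q^{2}$ to conclude. Your added remarks on the positivity of the weights $\tau^{10}/(\tau^{4}\pm 3M_{2})^{4}$ and on why $g\in\mathcal{P}_{3}$ are harmless elaborations of points the paper leaves implicit.
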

        \begin{proof}
        Let $p+iq=M_{1}(f(\cdot,0)), p_{0}+iq_{0}=M_{1}(g(\cdot,0))$ and $M_{2}=M_{2}(f(\cdot,0))=M_{2}(g(\cdot,0))$. Here $p=(1+3M_{2})Re(a_{2}(0)), q=-(1-3M_{2})Im(a_{2}(0))$ and $p_{0}=(1+3M_{2})Re(b_{2}(0)), q_{0}=-(1-3M_{2})Im(b_{2}(0))$ according to (\ref{initial0}). Since $|Re(b_{2}(0))|\geq |Re(a_{2}(0))|$, $|Im(b_{2}(0))|\geq|Im(a_{2}(0))|$,  we obtain that
    \begin{equation}    
    \label{kk0}
        |p_{0}|=\left|\frac{Re(b_{2}(0))}{1+3M_{2}}\right|\geq \left|\frac{Re(a_{2}(0))}{1+3M_{2}}\right|= |p|, |q_{0}|=\left|\frac{Im(b_{2}(0))}{1-3M_{2}}\right|\geq \left|\frac{Im(a_{2}(0))}{1-3M_{2}}\right|= |q|.
        \end{equation}
        Since $f(\zeta,0)$ fails to give rise to a global univalent solution, by Theorem~\ref{gusta}, 
        \begin{equation}
        \label{kk1}
           \frac{p^2\tau^{10}}{(\tau^4+3M_{2})^4}+\frac{q^2\tau^{10}}{(\tau^4-3M_{2})^4}\geq\frac{1}{4}\quad\mbox{for some $\tau\geq 1$}.
           \end{equation}
           Applying (\ref{kk0}) to (\ref{kk1}), we have
           \begin{equation}
           \label{kk2}
           \frac{p_{0}^2\tau^{10}}{(\tau^4+3M_{2})^4}+\frac{q_{0}^2\tau^{10}}{(\tau^4-3M_{2})^4}\geq\frac{1}{4}\quad\mbox{for some $\tau\geq 1$}
           \end{equation}
           and hence $g(\zeta,0)$ fails to give rise to a  global univalent solution by Theorem ~\ref{gusta}.
  \end{proof}

            The coefficient region  $\Lambda_{3,glob}$ is completely determined as follows:
       \begin{thm}
       \label{final}
        Define a set $A$ by
       \begin{equation}
      \left\{\left(\frac{p}{1+3s},\frac{-q}{1-3s}, s\right)\in\mathbb{R}^{3}|p^2\geq g_{1,s}(\tau), q^2= g_{2,s}(\tau),\tau^{*}(s)\leq \tau\leq (5s)^{\frac{1}{4}}, \frac{1}{5}<s<\frac{1}{3}\right\}
      \end{equation}
      where $g_{1,s}(\tau)$ and $g_{2,s}(\tau)$ are  as defined  in (\ref{kk}) and  $\tau^{*}(s)=( \max\{1,\frac{1}{\sqrt{21}}s\})^{\frac{1}{4}}$.  Then
      \begin{equation}
        \label{conclusion}
      \Lambda_{3,glob}=\Lambda_{3,loca}\cap (A)^{c}.
      \end{equation}
  \end{thm}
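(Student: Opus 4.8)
The plan is to build the whole argument on Theorem~\ref{gusta}, whose part (c) converts the problem into a one–parameter inequality. Writing $\Phi_{s}(\tau)=\frac{p^{2}\tau^{10}}{(\tau^{4}+3s)^{4}}+\frac{q^{2}\tau^{10}}{(\tau^{4}-3s)^{4}}$ with $s=M_{2}$, Theorem~\ref{gusta}(c) says that $f(\zeta,0)$ gives rise to a global solution exactly when $\Phi_{s}(\tau)<\tfrac14$ for all $\tau\geq1$; taking $\tau=1$ recovers, via Theorem~\ref{local}, the locally univalent condition $\Phi_{s}(1)<\tfrac14$. Identifying functions with their $\Lambda$-images, I set $B=\Lambda_{3,loca}\setminus\Lambda_{3,glob}$ (the loca points that fail to be global), so that proving $B=\Lambda_{3,loca}\cap A$ yields $\Lambda_{3,glob}=\Lambda_{3,loca}\cap A^{c}$. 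The decisive structural remark is that, for fixed $s$, the map $(p^{2},q^{2})\mapsto\Phi_{s}(\tau)$ is \emph{linear}, so each $\{\Phi_{s}(\tau)\geq\tfrac14\}$ is a half-plane in the $(p^{2},q^{2})$-variables, the global region is a convex intersection of half-planes, and the boundary between global and non-global is the envelope of the lines $\{\Phi_{s}(\tau)=\tfrac14\}_{\tau\geq1}$.

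I would then identify this envelope explicitly. For fixed $\tau,s$ the two equations $\Phi_{s}(\tau)=\tfrac14$ and $\frac{d}{d\tau}\Phi_{s}(\tau)=0$ are linear in $(p^{2},q^{2})$; solving them (after computing the $\tau$-derivatives of $\frac{\tau^{10}}{(\tau^{4}\pm3s)^{4}}$) gives precisely $p^{2}=g_{1,s}(\tau)$, $q^{2}=g_{2,s}(\tau)$ with $g_{1,s},g_{2,s}$ as in (\ref{kk}). Indeed a direct substitution confirms $\frac{g_{1,s}(\tau)\tau^{10}}{(\tau^{4}+3s)^{4}}+\frac{g_{2,s}(\tau)\tau^{10}}{(\tau^{4}-3s)^{4}}=\tfrac14$, which already delivers the easy inclusion $\Lambda_{3,loca}\cap A\subseteq B$: on $A$ one has $q^{2}=g_{2,s}(\tau)$ and $p^{2}\geq g_{1,s}(\tau)$ for some admissible $\tau\in[\tau^{*}(s),(5s)^{1/4}]$, whence $\Phi_{s}(\tau)\geq\tfrac14$ and the point is non-global. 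Passing to the $(x_{1},x_{2},x_{3})$-coordinates of Definition~\ref{def3} through $x_{1}=\frac{p}{1+3s},\,x_{2}=\frac{-q}{1-3s},\,x_{3}=s$ (Lemma~\ref{initial07}) turns the envelope into exactly the curve $g_{s}(\tau)$ of (\ref{curve}).

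For the reverse inclusion $B\subseteq A$ I would fix $s$ and use the symmetry $p\mapsto-p,\ q\mapsto-q$ of $\Phi_{s}$ to assume $p,q\geq0$, the degenerate case $q=0$ being the one settled in Kuznetsova and Tkachev~\cite{local}. Given a failing loca point, hold $q,s$ fixed and decrease $p$: at $p=0$ one checks, using $\frac{d}{d\tau}\frac{\tau^{10}}{(\tau^{4}-3s)^{4}}<0$ for $\tau\geq1$, that $\max_{\tau\geq1}\Phi_{s}(\tau)=\Phi_{s}(1)<\tfrac14$, so the point is global, whereas at the original $p$ it fails. By continuity there is a threshold $p^{2}=P_{0}$ where $\max_{\tau\geq1}\Phi_{s}(\tau)=\tfrac14$, attained at an interior $\tau_{0}>1$ with $\Phi_{s}(\tau_{0})=\tfrac14$ and $\frac{d}{d\tau}\Phi_{s}(\tau_{0})=0$; thus $(P_{0},q^{2})$ solves the envelope system, giving $q^{2}=g_{2,s}(\tau_{0})$ and $p^{2}\geq P_{0}=g_{1,s}(\tau_{0})$. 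A critical point with $p,q>0$ forces $\tau_{0}^{4}<5s$, so $\tau_{0}<(5s)^{1/4}$; the lower bound $\tau_{0}\geq\tau^{*}(s)$ is where Lemma~\ref{ggg11}(a),(b) enter, since for $1/\sqrt{21}<s<1/3$ the envelope arc with $\tau<\tau^{**}(s)$ lies outside $\Lambda_{3,loca}$ and a limit of loca points cannot sit there, while for $s\leq1/\sqrt{21}$ one has $\tau^{*}(s)=1<\tau_{0}$. Lemma~\ref{ggg11}(c),(d) make $\tau_{0}$ the unique such parameter, and Lemma~\ref{main} re-reads $p^{2}\geq g_{1,s}(\tau_{0})$ as the upward closure of $B$ in $(|p|,|q|)$; hence the point lies in $A$.

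I expect the reverse inclusion to be the main obstacle, and within it the subtle point is geometric rather than computational: one must ensure that the envelope $g_{s}$, restricted to $\tau\in[\tau^{*}(s),(5s)^{1/4}]$, is genuinely the frontier separating global from non-global loca points, i.e.\ that no line $\{\Phi_{s}(\tau)=\tfrac14\}$ with $\tau<\tau^{*}(s)$ cuts deeper into $\Lambda_{3,loca}$ than the chosen arc. This is exactly the content of Lemma~\ref{ggg11}, and the two-case split at $s=1/\sqrt{21}$ — governed by the sign of $\tau^{8}-21s^{2}$ in (\ref{g1})--(\ref{g2}) and (\ref{qqq}) — is where the monotonicity and injectivity of the arc must be used most carefully. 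Once this is secured, combining the two inclusions gives $B=\Lambda_{3,loca}\cap A$, and therefore $\Lambda_{3,glob}=\Lambda_{3,loca}\cap A^{c}$, which is (\ref{conclusion}).
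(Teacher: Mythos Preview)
Your proposal is correct and follows essentially the same route as the paper: both reduce via Theorem~\ref{gusta}(c) to the one-parameter inequality $\Phi_{s}(\tau)<\tfrac14$, compute the envelope of the family $\{\Phi_{s}(\tau)=\tfrac14\}_{\tau}$ by solving the two linear equations $\Phi_{s}=\tfrac14$, $\partial_{\tau}\Phi_{s}=0$ in $(p^{2},q^{2})$ to obtain $p^{2}=g_{1,s}(\tau)$, $q^{2}=g_{2,s}(\tau)$, and then invoke Lemma~\ref{ggg11} for the admissible $\tau$-range together with Lemma~\ref{main} for the upward closure $p^{2}\geq g_{1,s}(\tau)$. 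Your explicit two-inclusion structure and threshold/continuity argument simply flesh out what the paper compresses into its final line ``By (a)--(b) and Lemma~\ref{main}, we can describe $\Lambda_{3,glob}$ as~(\ref{conclusion}).''
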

  \begin{proof}
  Given $f(\zeta,0)$ satisfying $\Lambda(f(\cdot,0))\in\partial\Lambda_{3,glob}\cap\Lambda_{3,loca}$. Let $M_{1}(f(\cdot,0))=p+iq, M_{2}(f(\cdot,0))=M_{2}$.  Here $0<M_{2}<1/3$ since $\Lambda_{3,glob}\subset\Lambda_{3,loca}$. By Theorem~\ref{gusta}, there exists $\tau> 1$ such that 
 \begin{equation}\label{eq} \left\{ \begin{array}{ll}
         \frac{p^2\tau^{10}}{(\tau^4+3M_{2})^4}+\frac{q^2\tau^{10}}{(\tau^{4}-3M_{2})^4}&=\frac{1}{4}\\
        \frac{\partial}{\partial \tau}\left[\frac{p^2\tau^{10}}{(\tau^4+3M_{2})^4}+\frac{q^2\tau^{10}}{(\tau^4-3M_{2})^4}\right]&=0.\end{array} \right. 
        \end{equation}
        Denote 
        \[r_{1}= \frac{\tau^{10}}{(\tau^4+3M_{2})^4}, \quad r_{2}=\frac{\tau^{10}}{(\tau^{4}-3M_{2})^4}, \quad \alpha_{1}=  \frac{\partial}{\partial \tau}\left[r_{1}\right], \quad \alpha_{2}=\frac{\partial}{\partial \tau}\left[r_{2}\right].\]
              Equation (\ref{eq}) becomes
        \begin{equation}
       \label{eq1} 
        \left\{ \begin{array}{ll}
         p^2r_{1}+q^2r_{2}&=\frac{1}{4}\\
       p^2\alpha_{1}+q^2\alpha_{2}&=0.\end{array} \right. 
       \end{equation}
       Solving (\ref{eq1}), we obtain
       \begin{equation}
       \label{eq2}
       p^2=\frac{-1}{4}\frac{\alpha_{2}}{\alpha_{1} r_{2}-\alpha_{2} r_{1}},\quad q^2=\frac{1}{4}\frac{\alpha_{1}}{\alpha_{1} r_{2}-\alpha_{2} r_{1}}.
       \end{equation}
       Here 
       \begin{equation}
      \alpha_{1}=  \frac{\partial}{\partial \tau}\left[r_{1}\right]=\frac{6\tau^9\left[5M_{2}-\tau^4\right]}{(\tau^4+3M_{2})^5},\quad   \alpha_{2}=\frac{\partial}{\partial \tau}\left[r_{2}\right]=\frac{6\tau^9\left[-5M_{2}-\tau^4\right]}{(\tau^4-3M_{2})^5}        \end{equation}
      and 
      \[  \begin{array}{ll}
         \alpha_{1} r_{2}-\alpha_{2} r_{1}&=\frac{6\tau^{9}(5M_{2}-\tau^{4})}{(\tau^4+3M_{2})^5}\frac{\tau^{10}}{(\tau^4-3M_{2})^{4}}-\frac{6\tau^{9}(-5M_{2}-\tau^4)}{(\tau^4-3M_{2})^5}\frac{\tau^{10}}{(\tau^4+3M_{2})^{4}}\\
        &=\frac{6\tau^{19}}{(\tau^{4}+3M_{2})^{5}(\tau^4-3M_{2})^{5}}\left[(5M_{2}-\tau^4)(\tau^4-3M_{2})-(-5M_{2}-\tau^4)(\tau^4+3M_{2})\right]\\
        &=\frac{6\tau^{19}}{(\tau^4+3M_{2})^5(\tau^4-3M_{2})^5}\left[-\tau^8+8M_{2}\tau^4-15M_{2}^2+\tau^8+8M_{2}\tau^4+15M_{2}^2\right]\\
        &=\frac{96M_{2}\tau^{23}}{(\tau^4+3M_{2})^5(\tau^4-3M_{2})^5}.\end{array} \] 
        Therefore (\ref{eq2}) becomes
        \begin{equation}
        \label{par}
        p^2=g_{1,s}(\tau),\quad q^2=g_{2,s}(\tau),  \quad\mbox{where $s=M_{2}$ and $0<s<\frac{1}{3}$} .      \end{equation}
         \begin{enumerate}
        \item[($a$)] Since there exists $\tau>1$ such that  $g_{1,s}(\tau)\geq 0, g_{2,s}(\tau)\geq 0$, the value $s$ shall satisfy $1/5<s<1/3$.         \item[($b$)]Let $1/5<s<1/3.$ By (\ref{initial0}) and (\ref{par}), $\Lambda_{3,glob}$ is symmetric with respect to $x-$axis and $y-$axis and 
         \begin{equation}
        \partial\Lambda_{3,glob}\cap\Lambda_{3,loca}\cap\{(x_1,x_2,x_3)|x_1\geq 0, x_2\geq 0, x_3=s\}\subseteq\left\{(g_{s}(\tau),s)|1\leq\tau\leq (5s)^{\frac{1}{4}}\right\}\cap\Lambda_{3,loca}
        \end{equation} 
        where $g_{s}(\tau)$ satisfies all properties listed in Lemma \ref{ggg11}. 
         \end{enumerate}
         By (a)-(b) and Lemma \ref{main}, we can describe $\Lambda_{3,glob}$ as (\ref{conclusion}).
  \end{proof}
  \section{Classification of blow-up}
  \label{sec3}
  In Section \ref{sec2}, we completely determine the coefficient region $\Lambda_{3,glob}$. In this Section, we want to completely  classify these functions in $\mathcal{P}_{3,univ}(\overline{\mathbb{D}})$ which give rise to strong solutions blowing up at finite time by checking if these strong solutions can be continued after their blow-up time. Theorem \ref{mainthm} states these classification. \par
  For any point on $\Lambda_{3,univ}\cap (\overline{\Lambda_{3,glob}})^{c}$, the following property holds.
  \begin{lemma}
  \label{global}
Given $f(\zeta,0)\in\mathcal{P}_{3,glob}(\overline{\mathbb{D}})$ satisfying $\Lambda(f(\cdot,0))\in\Lambda_{3,univ}\cap (\overline{\Lambda_{3,glob}})^{c}$. Then $f(\zeta,0)$ gives rise to a strong solution to (\ref{pg}) $f(\zeta,t)$ which  blows up at finite time $t=t^{*}>0$ and cannot be continued after that .
  \end{lemma}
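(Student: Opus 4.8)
The plan is to follow the coefficient trajectory of the strong solution and to show its blow-up occurs exactly where the curve transversally crosses $\partial\Lambda_{3,loca}$, i.e. where a boundary cusp forms that admits no continuation. (I read the hypothesis as $f(\zeta,0)\in\mathcal{P}_{3,univ}(\overline{\mathbb{D}})$ with $\Lambda(f(\cdot,0))\in\Lambda_{3,univ}\cap(\overline{\Lambda_{3,glob}})^{c}$, since $f\in\mathcal{P}_{3,glob}$ is incompatible with $\Lambda(f)\notin\overline{\Lambda_{3,glob}}$.) First I would invoke the short-time well-posedness of Reissig and Wolfersdorf~\cite{reissig} to get a strong solution $f(\zeta,t)$ for $t\in[0,\epsilon)$, a degree-three polynomial with conserved $M_{1}=p+iq$ and $M_{2}$, whose leading coefficient $\tau=a_{1}(t)$ increases from $1$ by Howison and Hohlov~\cite{classification}. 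Setting $h(\tau)=\frac{p^{2}\tau^{10}}{(\tau^{4}+3M_{2})^{4}}+\frac{q^{2}\tau^{10}}{(\tau^{4}-3M_{2})^{4}}$, Theorem~\ref{local} applied to the trajectory $K(\tau)=\Lambda(f(\cdot,t))$ of Theorem~\ref{gusta}(b) shows the solution is locally univalent at parameter $\tau$ precisely when $h(\tau)<\tfrac14$, and $K(\tau)\in\partial\Lambda_{3,loca}$ (a zero of $f'$ reaching $\partial\mathbb{D}$) when $h(\tau)=\tfrac14$. By Theorem~\ref{gusta}(c), $\Lambda_{3,glob}$ is the set where $\sup_{\tau\ge1}h<\tfrac14$; being strictly outside its closure forces $\sup_{\tau\ge1}h(\tau)>\tfrac14$, while $h(1)<\tfrac14$ since the initial data is univalent. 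Hence there is a first $\tau_{c}>1$ with $h(\tau_{c})=\tfrac14$, at which a cusp forms.

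That the blow-up time is finite is immediate: $\Lambda(f(\cdot,0))\notin\Lambda_{3,glob}$ means $f$ does not generate a global univalent solution, so the strong solution ceases to exist at some finite $t^{*}$, and short-time existence gives $0<t^{*}<\infty$. The delicate point is to locate this blow-up exactly at $\tau_{c}$. On $[1,\tau_{c})$ the trajectory stays in $\Lambda_{3,loca}$, so the maximal locally univalent solution lives there and degenerates into a boundary cusp precisely at $\tau_{c}$; I must rule out that the strong solution loses injectivity earlier, at some $\tau_{u}<\tau_{c}$, while still locally univalent, for otherwise it could be continued as a locally univalent solution on $(\tau_{u},\tau_{c})$ and would fall in category $(C_{2})$. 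This is the finite-time form of Theorem~4.1 in Gustafsson and Lin~\cite{gustaf4}: under injection the domains $\Omega(t)$ increase, univalence cannot break before local univalence does, and a locally univalent solution issuing from univalent data stays univalent as long as it is locally univalent, so $\tau_{u}=\tau_{c}$. I expect this identification to be the main obstacle, precisely because $\partial\Lambda_{3,univ}$ is not known explicitly; the whole strategy is to bypass it by promoting local univalence to univalence through~\cite{gustaf4}.

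Finally I would exclude continuation. A continuation of $f(\zeta,t)$ past $t^{*}$ as a locally univalent solution would require $h(\tau)<\tfrac14$ for $\tau$ slightly larger than $\tau_{c}$, hence $h'(\tau_{c})\le0$; but $\tau_{c}$ is the first zero of $h-\tfrac14$ reached from below, so $h'(\tau_{c})\ge0$, and a continuation would therefore force $h(\tau_{c})=\tfrac14$ together with $h'(\tau_{c})=0$. This is exactly the tangency system~(\ref{eq}) whose unique solution, computed in the proof of Theorem~\ref{final}, is $p^{2}=g_{1,M_{2}}(\tau_{c})$, $q^{2}=g_{2,M_{2}}(\tau_{c})$; by Theorem~\ref{final} this places $\Lambda(f(\cdot,0))$ on $\partial\Lambda_{3,glob}\cap\Lambda_{3,loca}\subseteq\overline{\Lambda_{3,glob}}$, contradicting the hypothesis. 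Thus $h'(\tau_{c})>0$: the trajectory crosses $\partial\Lambda_{3,loca}$ transversally, $h(\tau)>\tfrac14$ immediately beyond $\tau_{c}$, and no locally univalent solution extends $f(\zeta,t)$ past $t^{*}$. Combining this with $\tau_{u}=\tau_{c}$ from the previous step, the strong solution blows up at the finite time $t^{*}>0$ corresponding to $\tau_{c}$ and cannot be continued after it, as claimed.
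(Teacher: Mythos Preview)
Your argument follows essentially the same route as the paper's: assuming the strong solution can be continued past its blow-up time, you derive the tangency conditions $h(\tau_c)=\tfrac14$, $h'(\tau_c)=0$, and then the computation solving~(\ref{eq}) in Theorem~\ref{final} forces $p^2=g_{1,M_2}(\tau_c)$, $q^2=g_{2,M_2}(\tau_c)$, placing $\Lambda(f(\cdot,0))$ on $\partial\Lambda_{3,glob}$ and contradicting the hypothesis. The paper says exactly this, only more tersely, by writing $\Lambda(f(\cdot,0))=(g_s(a_1(t^*)),s)$ and invoking Lemma~\ref{ggg11}.

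Your second paragraph, however, takes an unnecessary detour and contains a small confusion. You worry that the strong solution might lose injectivity at some $\tau_u<\tau_c$ while still locally univalent, claiming this ``would fall in category $(C_2)$''. That is not right: category $(C_2)$ requires continuation as a \emph{strong} (univalent) solution, not merely as a locally univalent one. If the strong solution blows up at $\tau_u<\tau_c$ by boundary self-contact, then by maximality the locally univalent continuation is not univalent on $(\tau_u,\tau_u+\epsilon)$ for any $\epsilon>0$; hence the strong solution cannot be continued, and we are already in $(C_3)$ with nothing further to prove. So you do not need to rule this case out at all, and in particular you do not need the finite-time upgrade of Theorem~4.1 in~\cite{gustaf4} that you invoke (which, as stated there, is a global statement). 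The paper simply passes over this case in silence. Your tangency argument in the third paragraph then handles the remaining case correctly.
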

  \begin{proof}
 Without loss of generality, we assume $f(\zeta,0)=\zeta+a_{2}(0)\zeta^2+a_{3}(0)\zeta^{3}$ and $Re(a_{2}(0)), Im(a_{2}(0))$ are nonnegative, $a_{3}(0)>0$. Let $M_{2}=M_{2}(f(\cdot,0))$. If the univalent solution $f(\zeta,t)$ can be continued after $t=t^{*}$ for at least a short time, then 
 \begin{equation}
 \Lambda(f(\cdot,0))=(g_{s}(a_{1}(t^{*})),s),\quad\mbox{where $s=M_{2}\in (0,\frac{1}{3})$ and $a_{1}(t^{*})\in [1,(5s)^{\frac{1}{4}}] $}
 \end{equation}
 and hence $\Lambda(f(\cdot,0))\in\partial \Lambda_{3,glob}$ or $\Lambda(f(\cdot,0))\in(\Lambda_{3,loca})^{c}$ by Lemma \ref{ggg11}. This contradicts to the original assumption for $f(\zeta,0)$.
 \end{proof}
  
For any point on $\partial\Lambda_{3,glob}\cap\Lambda_{3,loca}$, the following property holds.
  \begin{lemma}
  \label{blow}
  Given $f(\zeta,0)\in\mathcal{P}_{3,glob}(\overline{\mathbb{D}})$ satisfying $\Lambda(f(\cdot,0))\in\partial\Lambda_{3,glob}\cap\Lambda_{3,loca}$. Then $f(\zeta,0)$  gives rise to a strong solution to (\ref{pg}) $f(\zeta,t)$ which  blows up at finite time $t=t^{*}>0$ but can be continued after that. As blow-up happens, $f^{'}(\zeta,t^{*})=0$ for some $|\zeta|=1$ and $f(\zeta,t^{*})$ forms cusps of  $5/2$ or $9/2$ types. Moreover, $f(\zeta,t), t>t^{*}$ can be continued as a global univalent solution to (\ref{pg}). 
  \end{lemma}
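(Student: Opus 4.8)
The plan is to track the coefficient trajectory $\tau\mapsto\Lambda(f(\cdot,t))$, $\tau=a_{1}(t)$, against the regions $\Lambda_{3,loca}$ and $\partial\Lambda_{3,glob}$ from Theorem~\ref{local}, Theorem~\ref{final} and Lemma~\ref{ggg11}, and to read off blow-up, cusp formation and continuation from the instant this trajectory meets $\partial\Lambda_{3,loca}$. After normalizing $a_{1}(0)=1$ and using the symmetry of $\Lambda_{3,glob}$ in the two coordinate axes (Theorem~\ref{final}) to assume $Re(a_{2}(0)),Im(a_{2}(0))\geq 0$, I write $p+iq=M_{1}(f(\cdot,0))$, $s=M_{2}(f(\cdot,0))\in(0,1/3)$ and set
\[h(\tau)=\frac{p^{2}\tau^{10}}{(\tau^{4}+3s)^{4}}+\frac{q^{2}\tau^{10}}{(\tau^{4}-3s)^{4}},\]
the quantity in Theorem~\ref{gusta}(c). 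By Lemma~\ref{initial07} and the computation giving (\ref{par}), the hypothesis $\Lambda(f(\cdot,0))\in\partial\Lambda_{3,glob}\cap\Lambda_{3,loca}$ says precisely that there is a tangency parameter $\tau_{0}\in[\tau^{*}(s),(5s)^{1/4}]$ with $\tau_{0}>1$ solving (\ref{eq}), i.e. $h(\tau_{0})=\tfrac14$ and $h'(\tau_{0})=0$, and with $p^{2}=g_{1,s}(\tau_{0})$, $q^{2}=g_{2,s}(\tau_{0})$.

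Since $h(\tau_{0})=\tfrac14$ violates the strict inequality, Theorem~\ref{gusta}(c) shows $f(\cdot,0)$ is not global. Because $a_{1}(t)$ is continuous, increasing and onto $[1,\infty)$ (Howison and Hohlov~\cite{classification}), there is a finite $t^{*}>0$ with $a_{1}(t^{*})=\tau_{0}$. Using the explicit derivatives $\alpha_{1}=\partial_{\tau}r_{1}$, $\alpha_{2}=\partial_{\tau}r_{2}$ of the proof of Theorem~\ref{final} (so that $h'=p^{2}\alpha_{1}+q^{2}\alpha_{2}$, with $\alpha_{2}<0$ throughout and $\alpha_{1}<0$ for $\tau^{4}>5s$), I would show that $\tau_{0}$ is the unique interior maximum of $h$, with value $\tfrac14$, so that $h<\tfrac14$ on $[1,\tau_{0})$ and the strong solution survives on $[0,t^{*})$, first reaching $\partial\Lambda_{3,loca}$ at $t^{*}$. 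By Theorem~\ref{local} this is exactly the assertion that $f'(\zeta,t^{*})=0$ for some $|\zeta|=1$, which is the blow-up.

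For the cusp, at $t^{*}$ the trajectory lies on $\partial\Lambda_{3,loca}$, so by the parametrization of that boundary in the proof of Theorem~\ref{local} (zeros $t(a+bi)$ and $a-bi$ with $a^{2}+b^{2}=1$) the critical point $\zeta_{0}\in\partial\mathbb{D}$ is a simple zero of $f'(\cdot,t^{*})$, except when $t=1$ and $\sin\theta=0$; the latter forces $a_{2}(t^{*})\in\mathbb{R}$ (equivalently $q=0$, the endpoint $\tau_{0}=(5s)^{1/4}$) and a double zero. Since $f'(\cdot,t^{*})$ is quadratic its zero has order $m\in\{1,2\}$, which is why exactly two cusp types occur. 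I would then expand $f(\zeta,t^{*})$ about $\zeta_{0}$ along $\zeta=\zeta_{0}e^{is}$ and use the P-G equation (\ref{pg}) carried to $t=t^{*}$ to eliminate the subleading normal terms that a purely local count would leave, identifying the singularity as a $5/2$ cusp when $m=1$ and a $9/2$ cusp when $m=2$. Carrying out this expansion and verifying that the P-G relation annihilates exactly the right terms is the step I expect to be the main obstacle.

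Finally, for the continuation I would retain the formula $F(\zeta,\tau)$ from the proof of Theorem~\ref{gusta}(b) for $\tau>\tau_{0}$. The sign information above (namely $h$ strictly decreasing for $\tau^{4}>5s$, together with $\tau_{0}\leq(5s)^{1/4}$ being its only interior maximum, which is also encoded in the monotonicity of Lemma~\ref{ggg11}(c)) gives $h(\tau)<\tfrac14$ for every $\tau>\tau_{0}$. Hence $F(\cdot,\tau)\in\mathcal{P}_{3,loca}(\overline{\mathbb{D}})$ for all $\tau>\tau_{0}$, and reparametrizing time by $\frac{d}{dt}a_{1}=1/Q(\tau)>0$ as in Theorem~\ref{gusta}(b) produces a locally univalent solution of (\ref{pg}) for $t>t^{*}$ that exists for all time. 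By Theorem~4.1 of Gustafsson and Lin~\cite{gustaf4}, this global locally univalent continuation is a global univalent solution, which is the asserted continuation of $f(\zeta,t)$ past $t^{*}$.
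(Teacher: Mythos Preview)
Your trajectory-tracking outline matches the paper's strategy, and your continuation argument past $t^{*}$ (via Theorem~\ref{gusta}(b) and Gustafsson--Lin) is essentially the paper's. There are, however, two substantive divergences worth flagging.

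\textbf{Univalence on $[0,t^{*})$.} You pass from $h<\tfrac14$ on $[1,\tau_{0})$ directly to ``the strong solution survives on $[0,t^{*})$''. But $h(\tau)<\tfrac14$ only certifies that the trajectory stays inside $\Lambda_{3,loca}$, i.e.\ local univalence; it does not by itself rule out an earlier blow-up at some $t^{**}<t^{*}$ by self-intersection of $\partial\Omega(t)$ while $f'(\cdot,t)$ remains nonvanishing on $\overline{\mathbb{D}}$. The paper closes this gap by observing that the whole trajectory $K$ (hence $K_{1}$) lies in $\overline{\Lambda_{3,univ}}$: since $\Lambda(f(\cdot,0))\in\partial\Lambda_{3,glob}$, nearby initial data lie in $\Lambda_{3,glob}$, their trajectories lie in $\Lambda_{3,univ}$ for all time, and a limit argument gives $K_{1}\subset\overline{\Lambda_{3,univ}}$. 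An overlap blow-up at $t^{**}$ would force $\Lambda(f(\cdot,t))$ into $\Lambda_{3,loca}\cap(\overline{\Lambda_{3,univ}})^{c}$ for $t$ slightly past $t^{**}$, a contradiction. Your argument needs some such step.

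\textbf{Cusp type.} Here you genuinely depart from the paper. The paper does not expand $f(\zeta,t^{*})$ locally at all; it first establishes that the solution continues past $t^{*}$, then invokes Howison~\cite{obstacle} (continuable cusps satisfy $m\equiv 1\pmod 4$) and the degree bound $\tfrac{m-1}{2}\leq 4$ from Gustafsson~\cite{gusta01} to conclude $m\in\{5,9\}$. Your plan instead is to read off the cusp order from the multiplicity (one or two) of the boundary zero of the quadratic $f'(\cdot,t^{*})$ and then use the P--G relation to kill the $3/2$ (resp.\ $7/2$) term. That is a reasonable and more self-contained route, but a simple zero of $f'$ generically produces a $3/2$ cusp, and showing that the P--G equation forces the necessary extra vanishing is essentially a local version of Howison's result---nontrivial to carry out, as you acknowledge. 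So the paper's argument is considerably shorter here, at the cost of importing two external theorems; yours would be more explicit but is currently only a sketch at the key step.
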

  \begin{proof}
  Without loss of generality, we assume $f(\zeta,0)=\zeta+a_{2}(0)\zeta^2+a_{3}(0)\zeta^3$. Denote $M_{1}(f(\cdot,0))=p+iq$, $M_{2}(f(\cdot,0))=M_{2}$.  Then by (\ref{initial0}), $a_{2}(0)=\frac{p}{1+3M_{2}}+i\frac{-q}{1-3M_{2}}$ and $a_{3}(0)=M_{2}$. Since $\Lambda_{3,glob}$ is symmetric with respect to $x=0$ and $y=0$ by Theorem~\ref{final}, we can assume $p\geq 0, -q\geq 0$.  \par
  We would like first to show that  $K$ defined in Theorem \ref{gusta} only touches $\partial\Lambda_{3,loca}$ once. Let $g_{s}(\tau)$ ,  $\tau^{*}(s)$ be defined as  in Lemma \ref{ggg11} and $s=M_{2}$. If $K$ touches $\partial\Lambda_{3,loca}$ at least twice,   there exist two distinct real numbers $\tau_{1}, \tau_{2}\in [\tau^{*}(s), (5s)^{1/4}]$ such that   $g_{s}(\tau_{1})=g_{s}(\tau_{2})$ and this contradicts to Lemma \ref{ggg11}.\par
   Let  $K$ touch $\partial\Lambda_{3,loca}$ as $\tau=\tau_{0}^{*}$.  Define 
  \[K_{1}=\left\{\left(\frac{p\tau}{\tau^4+3M_{2}}, \frac{-q\tau}{\tau^4-3M_{2}}\frac{M_{2}}{\tau^4}\right)\mid 1\leq \tau<\tau_{0}^{*}\right\}, \]
  \[K_{2}=\left\{\left(\frac{p\tau}{\tau^4+3M_{2}}, \frac{-q\tau}{\tau^4-3M_{2}}, \frac{M_{2}}{\tau^4}\right)\mid \tau^{*}_{0}< \tau\right\}.\]
  Here, 
   \begin{equation}
   \label{jjj}
   K_{1}, K_{2}\subset\overline{\Lambda_{3,univ}}.
   \end{equation}
 Since $K_{1}\subset\Lambda_{3,loca}$, there exists $t^{*}$ such that  $f(\zeta,0)$ at least gives rise to a locally univalent solution $f(\zeta,t)$ for $0\leq t<t^{*}$ and this locally univalent solution loses locally univalency as $t=t^{*}$.   
 We still need to show that $f(\zeta,t), 0\leq t<t^{*}$ is an univalent solution to (\ref{pg}). If there exists $t^{**}< t^{*}$ such that $f(\zeta,t^{**})$ blows up due to the overlapping of the boundary, then there exists $\epsilon>0$ such that $\Lambda(f(\cdot,t))\in\Lambda_{3,loca}\cap(\overline{\Lambda_{3,univ}})^{c}, t^{**}<t<t^{**}+\epsilon$ and hence $K_{1}\cap(\overline{\Lambda_{3,univ}})^{c}\neq\emptyset$. However, this  contradicts to (\ref{jjj}) and we hence conclude that $f(\zeta,t),0\leq t<t^{*}$ is an univalent solution to (\ref{pg}).  Since $K_{2}\subset\Lambda_{3,loca}$, by Theorem \ref{gusta}, $f(\zeta,t)$ is an univalent solution  for  $t>t^{*}$.\par
Now we assume $f(\zeta,t)$ blows up due to the formation of $m/2$ cusps. From Theorem 2.1 in Gustafsson~\cite{gusta01}, we can conclude that 
 \begin{equation}
 \label{p1}
 \frac{m-1}{2}\leq 4.
 \end{equation}
 By Howison~\cite{obstacle}, if $f(\zeta,t)$ can be continued after blow-up, then 
\begin{equation}
\label{p4}
m\equiv 1\quad(\mbox{mod $4$}).
\end{equation}
By (\ref{p1}) and (\ref{p4}), we conclude that $m=5$ or $9$.

 \end{proof}
  \begin{thm}
  \label{mainthm}
  \begin{enumerate}
  \item[(a)] $f(\zeta,0)\in\mathcal{P}_{3}$, $\Lambda(f(\cdot,0))\in\Lambda_{3,univ}\cap (\overline{\Lambda_{3,glob}})^{c}$ if and only if the  strong solution to (\ref{pg}) $f(\zeta,t)$ blows up at finite time $t=t^{*}$ and cannot be continued after that .
  \item [(b)] $f(\zeta,0)\in\mathcal{P}_{3}$, $\Lambda(f(\cdot,0))\in\partial\Lambda_{3,glob}\cap\Lambda_{3,loca}$ if and only if the  strong solution to (\ref{pg}) $f(\zeta,t)$  blows up at finite time $t=t^{*}>0$ but can be continued after that. As blow-up happens, $f^{'}(\zeta,t^{*})=0$ for some $|\zeta|=1$ and $f(\zeta,t^{*})$ forms cusps of  $5/2$ or $9/2$ types. Moreover, $f(\zeta,t), t>t^{*}$ can be continued as a global univalent solution to (\ref{pg}). \par
  \end{enumerate}
  \end{thm}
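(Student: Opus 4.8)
The plan is to obtain both biconditionals as a formal consequence of Lemmas~\ref{global} and~\ref{blow} together with the trichotomy built into the categories $C_1,C_2,C_3$. First I would record the structural facts that make the bookkeeping work. By Reissig--Wolfersdorf and Gustafsson~\cite{reissig,gustaf1}, every $f(\zeta,0)\in\mathcal{P}_{3,univ}(\overline{\mathbb{D}})$ admits a unique degree-three polynomial strong solution on a maximal interval, and exactly one of three mutually exclusive alternatives occurs: the solution is global ($C_1$), blows up at finite $t^{*}$ but continues ($C_2$), or blows up at finite $t^{*}$ and terminates ($C_3$). By the very definition of $\Lambda_{3,glob}$, the category $C_1$ corresponds precisely to $\Lambda(f(\cdot,0))\in\Lambda_{3,glob}$. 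I would then use the inclusion $\Lambda_{3,univ}\subseteq\Lambda_{3,loca}$ together with the topological decomposition
\[
\Lambda_{3,univ}=\Lambda_{3,glob}\ \sqcup\ \bigl(\partial\Lambda_{3,glob}\cap\Lambda_{3,univ}\bigr)\ \sqcup\ \bigl(\Lambda_{3,univ}\cap(\overline{\Lambda_{3,glob}})^{c}\bigr),
\]
which is exact once $\Lambda_{3,glob}$ is known to be relatively open.

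With these in hand, the forward implication of (a) is exactly Lemma~\ref{global}, and the forward implication of (b), including the $5/2$ or $9/2$ cusp statement and the global-continuation statement, is exactly Lemma~\ref{blow}. For the two reverse implications I would argue by exclusion. If the solution lies in $C_3$, then $\Lambda(f(\cdot,0))\notin\Lambda_{3,glob}$ (otherwise $C_1$), and the contrapositive of Lemma~\ref{blow}, combined with $\Lambda(f(\cdot,0))\in\Lambda_{3,univ}\subseteq\Lambda_{3,loca}$, forces $\Lambda(f(\cdot,0))\notin\partial\Lambda_{3,glob}$; the displayed decomposition then places $\Lambda(f(\cdot,0))$ in $\Lambda_{3,univ}\cap(\overline{\Lambda_{3,glob}})^{c}$, which is the reverse direction of (a). Symmetrically, if the solution lies in $C_2$, then $\Lambda(f(\cdot,0))\notin\Lambda_{3,glob}$ and, by the contrapositive of Lemma~\ref{global}, $\Lambda(f(\cdot,0))\notin\Lambda_{3,univ}\cap(\overline{\Lambda_{3,glob}})^{c}$; hence $\Lambda(f(\cdot,0))\in\overline{\Lambda_{3,glob}}\setminus\Lambda_{3,glob}=\partial\Lambda_{3,glob}$, and since it lies in $\Lambda_{3,univ}\subseteq\Lambda_{3,loca}$ we obtain $\Lambda(f(\cdot,0))\in\partial\Lambda_{3,glob}\cap\Lambda_{3,loca}$, the reverse direction of (b). The cusp and continuation assertions in (b) are inherited from the already-established forward direction through Lemma~\ref{blow}.

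The only genuinely nontrivial point, the rest being logical bookkeeping, is to guarantee that the three coefficient regions exhaust $\Lambda_{3,univ}$ with neither overlap nor gap; I expect this to be the main obstacle. It reduces to two checks. The first is that $\Lambda_{3,glob}$ is relatively open, so that the boundary decomposition above is exact; this I would read off from the characterization in Theorem~\ref{gusta}(c), since the global condition is the strict inequality $\sup_{\tau\geq 1}\bigl[\frac{p^{2}\tau^{10}}{(\tau^{4}+3M_{2})^{4}}+\frac{q^{2}\tau^{10}}{(\tau^{4}-3M_{2})^{4}}\bigr]<\frac14$ for a continuous function of $(p,q,M_{2})$ that decays to $0$ as $\tau\to\infty$, so the supremum is attained on a compact $\tau$-set and the condition is open. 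The second is that $\partial\Lambda_{3,glob}\cap\Lambda_{3,loca}$ consists of genuinely univalent data, so that part (b) is not vacuous; this is precisely what the first half of the proof of Lemma~\ref{blow} yields, namely that from such boundary data the solution remains univalent on $[0,t^{*})$, whence $f(\zeta,0)$ is univalent and $\partial\Lambda_{3,glob}\cap\Lambda_{3,loca}\subseteq\Lambda_{3,univ}$. Once these two facts are in place, both biconditionals close up.
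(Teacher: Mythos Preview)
Your proposal is correct and takes essentially the same approach as the paper: both derive the biconditionals from Lemmas~\ref{global} and~\ref{blow} together with the decomposition of $\Lambda_{3,univ}$ into the three pieces $\Lambda_{3,glob}$, $\partial\Lambda_{3,glob}\cap\Lambda_{3,loca}$, and $\Lambda_{3,univ}\cap(\overline{\Lambda_{3,glob}})^{c}$. Your version is in fact more careful than the paper's one-line proof, since you explicitly justify the two points the paper leaves implicit, namely that $\Lambda_{3,glob}$ is open (via Theorem~\ref{gusta}(c)) and that $\partial\Lambda_{3,glob}\cap\Lambda_{3,loca}\subseteq\Lambda_{3,univ}$ (via the argument inside Lemma~\ref{blow}).
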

 \begin{proof}
By Lemma \ref{global} and Lemma \ref{blow} and the fact that 
\[\Lambda_{3,univ}=\Lambda_{3,glob}\cup \left(\Lambda_{3,univ}\cap (\overline{\Lambda_{3,glob}})^{c}\right)\cup\left( \partial\Lambda_{3,glob}\cap\Lambda_{3,loca}\right),\]
 Theorem \ref{mainthm} must hold.
 \end{proof}
  \begin{rem}
   Huntingford \cite{hford} has shown Theorem \ref{blow} in the case that $f(\zeta,0)$ has real-coefficients.
  \end{rem}
  \section{Discussion }
  \label{sec4}
  Our methods can be also applied to degree $n\geq 4$ univalent polynomial solutions to (\ref{pg}). One can start from the case that $f(\zeta,0)\in\mathcal{P}_{n,univ}(\overline{\mathbb{D}})$ has real-coefficients since  it is known that real-coefficient $f(\zeta,0)\in\mathcal{P}_{n,loca}(\overline{\mathbb{D}})$ gives rise to a real-coefficient univalent solution $f(\zeta,t)\in\mathcal{P}_{n,loca}(\overline{\mathbb{D}})$ to (\ref{pg}) and  there are some known results which can be useful for handling this case. For example,  in Kuznetsova and Tkachev \cite{local}, the authors obtain the coefficient region of real-coefficient polynomials  in $\mathcal{P}_{n,loca}(\overline{\mathbb{D}}),n\geq 3$.  However, one of the difficulties for dealing with the case $n\geq 4$ is that the  calculation can not be as simple as the calculation in the case that $n=3$. Our future work is  to handle the case that $n\geq 4$.

\pagebreak


\bibliography{main0}

\end{document}